\newcommand{\bra}[1]{\left\langle{#1}\right\vert}
\newcommand{\ket}[1]{\left\vert{#1}\right\rangle}
\newcommand{\qw}[1][-1]{\ar @{-} [0,#1]}
\newcommand{\multigate}[2]{*+<1em,.9em>{\hphantom{#2}} \qw \POS[0,0].[#1,0];p !C *{#2},p \save+LU;+RU **\dir{-}\restore\save+RU;+RD **\dir{-}\restore\save+RD;+LD **\dir{-}\restore\save+LD;+LU **\dir{-}\restore}
\newcommand{\ghost}[1]{*+<1em,.9em>{\hphantom{#1}} \qw}
\newcommand{\ustick}[1]{*!D!<0em,-.2em>=<0em>{{\scriptstyle #1}}}
\newcommand{\Qcircuit}[1][0em]{\xymatrix @*=<#1>} 
\newcommand{\node}[2][]{{\begin{array}{c} \ _{#1}\  \\ {#2} \\ \ \end{array}}\drop\frm{o} }
\tikzstyle{texthere}=[inline text, font={\footnotesize}]
\tikzstyle{tiny box}=[rectangle, inline text, fill=white, draw, minimum height=5mm, yshift=-0.5mm, minimum width=5mm, font={\small}]
\tikzstyle{small box}=[rectangle, inline text, fill=white, draw, minimum height=7.5mm, yshift=-0.5mm, minimum width=7.5mm, font={\small}]
\tikzstyle{medium box}=[rectangle, inline text, fill=white, draw, minimum height=10mm, yshift=-0.5mm, minimum width=7.5mm, font={\small}]
\tikzstyle{semilarge box}=[rectangle, inline text, fill=white, draw, minimum height=12.5mm, yshift=-0.5mm, minimum width=7.5mm, font={\small}]
\tikzstyle{almost large box}=[rectangle, inline text, fill=white, draw, minimum height=15mm, yshift=-0.5mm, minimum width=7.5mm, font={\small}]
\tikzstyle{big box}=[rectangle, inline text, fill=white, draw, minimum height=17.5mm, yshift=-0.5mm, minimum width=7.5mm, font={\small}]
\tikzstyle{large box}=[rectangle, inline text, fill=white, draw, minimum height=20mm, yshift=-0.5mm, minimum width=7.5mm, font={\small}]
\tikzstyle{very large box}=[rectangle, inline text, fill=white, draw, minimum height=22.5mm, yshift=-0.5mm, minimum width=7.5mm, font={\small}]
\tikzstyle{wide very large box}=[rectangle, inline text, fill=white, draw, minimum height=20mm, yshift=-0.5mm, minimum width=10mm, font={\small}]
\tikzstyle{upground}=[circuit ee IEC, thick, ground, scale=2.1]
\tikzstyle{downground}=[circuit ee IEC, thick, ground, rotate=180, scale=2.1]
\tikzstyle{point}=[regular polygon, regular polygon sides=3, draw, scale=0.67, inner sep=-0.5pt, minimum width=9mm, fill=white, regular polygon rotate=90]
\tikzstyle{copoint}=[regular polygon, regular polygon sides=3, draw, scale=0.67, inner sep=-0.5pt, minimum width=9mm, rotate=-90, fill=white]
\tikzstyle{wide copoint}=[fill=white, draw, shape=isosceles triangle, shape border rotate=180, isosceles triangle stretches=true, inner sep=0pt, minimum width=1.5cm, minimum height=6.12mm]
\tikzstyle{wide point}=[fill=white, draw, shape=isosceles triangle, shape border rotate=0, isosceles triangle stretches=true, inner sep=0pt, minimum width=1.5cm, minimum height=6.12mm, yshift=-0.0mm]
\tikzstyle{white dot}=[fill=white, draw=black, shape=circle]
\tikzstyle{gray dot}=[fill={rgb,255: red,148; green,148; blue,148}, draw=black, shape=circle]
\tikzstyle{dot}=[inner sep=0mm, minimum width=1mm, minimum height=1mm, draw, shape=circle, fill=black]
\tikzstyle{empty dot}=[inner sep=0mm, minimum width=3mm, minimum height=3mm, draw, shape=circle, fill=none]
\tikzstyle{dashed line}=[-, dashed, dash pattern=on 1mm off 0.5mm]
\tikzstyle{dotted line}=[-, style=dotted, tikzit draw=brown]
\tikzstyle{cyan dashed line}=[-, style=dashed, tikzit draw=cyan]
\tikzstyle{green fill line}=[-, fill={green!90!black}, tikzit draw=green]
\tikzstyle{blue fill}=[-, fill=blue, tikzit fill=blue, tikzit draw={rgb,255: red,102; green,117; blue,255}]
\tikzstyle{red}=[-, draw=red, tikzit draw=red]
\tikzstyle{blue}=[-, draw=blue, tikzit draw=blue]
\tikzstyle{thick black}=[-, draw=black, tikzit draw=black, line width=1.8pt]
\tikzstyle{dotted red}=[-, draw=red, style=dotted, tikzit draw=red]
\tikzstyle{dotted blue}=[-, draw=blue, tikzit draw=blue, style=dotted]
\tikzstyle{dashed thick blue}=[-, draw={rgb,255: red,28; green,176; blue,255}, tikzit draw={rgb,255: red,83; green,19; blue,156}, line width=1pt, style=dashed]
\tikzstyle{dashed thick red}=[-, draw=red, tikzit draw={rgb,255: red,255; green,100; blue,10}, line width=1pt, style=dashed]
\tikzstyle{green}=[-, draw=green, tikzit draw=green]
\tikzstyle{dotted green}=[-, draw=green, tikzit draw=green, style=dotted]
\tikzstyle{arrow}=[->]
\tikzstyle{arrow green dashed}=[draw=green, ->, tikzit draw=green, style=dashed]
\tikzstyle{arrow dashed red}=[draw=red, ->, style=dashed, tikzit draw=red]
\tikzstyle{dashed green}=[-, tikzit draw=green, draw=green, style=dashed]
\DeclareMathOperator*{\argmin}{arg\,min}
\newcommand{\transf}[1]{\ensuremath{\mathcal{#1}}}
\newcommand{\tA}{\transf A}
\newcommand{\tC}{\transf C}
\newcommand{\tD}{\transf D}
\newcommand{\tM}{\transf M}
\newcommand{\tS}{\transf S}
\newcommand{\tU}{\transf U}
\newcommand{\tV}{\transf V}
\newcommand{\tI}{\transf I}
\newcommand{\sys}[1]{\ensuremath{\mathrm{#1}}}
\newcommand{\rA}{\sys A}
\newcommand{\rB}{\sys B}
\newcommand{\rC}{\sys C}
\newcommand{\rE}{\sys E}
\newcommand{\st}{\mathsf{St}}
\newcommand{\vertiii}[1]{{\left\vert\kern-0.25ex\left\vert\kern-0.25ex\left\vert #1
    \right\vert\kern-0.25ex\right\vert\kern-0.25ex\right\vert}}
\newcommand{\cL}{\mathcal{L}}
\newcommand{\cU}{\mathcal{U}}
\newcommand{\cV}{\mathcal{V}}
\newcommand{\Tr}{\mathrm{Tr}}
\def\>{\rangle}
\def\<{\langle}
\def\Tr{\operatorname{Tr}}
\newcommand{\QCs}[1]{\mathfrak{C}(#1)}
\newtheorem{theorem}{Theorem}[section]
\newtheorem{lemma}{Lemma}[section]
\newtheorem{proposition}{Proposition}[section]
\newtheorem{remark}{Remark}
\newtheorem{definition}{Definition}[section]
\begin{document}
\title{Disentangling signalling and causal influence}

\author{Kathleen Barsse}

\email{kathleen.barsse@ens-paris-saclay.fr}

\affiliation{Université Paris-Saclay, ENS Paris-Saclay, 91190,
  Gif-sur-Yvette, and Université de Lorraine, CNRS, Inria, LORIA, F-54000 Nancy, France}

\author{Paolo Perinotti}

\email{paolo.perinotti@unipv.it}

\affiliation{QUIT group, Physics Dept., Pavia University, and INFN Sezione di Pavia, via Bassi 6, 27100 Pavia, Italy}

\author{Alessandro Tosini}

\email{alessandro.tosini@unipv.it}

\affiliation{QUIT group, Physics Dept., Pavia University, and INFN Sezione di Pavia, via Bassi 6, 27100 Pavia, Italy}

\author{Leonardo Vaglini}

\email{leonardo.vaglini01@universitadipavia.it}

\affiliation{Aix-Marseille University, CNRS, LIS, 13288 Marseille CEDEX 09, France}

\begin{abstract}
  The causal effects activated by a quantum interaction are studied,
  modelling the last one as a bipartite unitary channel. The two
  parties, say Alice and Bob, can use the channel to exchange messages---i.e.~to signal. 
  On the other hand, the most general form of causal influence includes also the 
  possibility for Alice, via
  a local operation on her system, to modify Bob's correlations and viceversa. The 
  presence or absence of these two effects are equivalent, but when they both occur, 
  they can differ in their magnitude.
  
  We study the properties of two functions that quantify the amount of signalling
  and causal influence conveyed by an arbitrary unitary
  channel. The functions are proved to be continuous and monotonically
  increasing with respect to the tensor product of channels. Monotonicity
  is instead disproved in the case of sequential composition.

  Signalling and causal influence are analytically computed for the
  quantum SWAP and CNOT gates, in the single use scenario, in
  the $n$-parallel uses scenario, and in the asymptotic regime. A finite gap is 
  found between signalling and causal influence for the quantum CNOT, thus proving
  the existence of extra causal effects that cannot be explained in terms of
  communication only. However, the gap disappears in the asymptotic limit
  of an infinite number of parallel uses, leaving room for asymptotic
  equivalence between signalling and causal influence.

\end{abstract}

\maketitle

\section{Introduction}
The study of causal relations in networks of systems and processes~\cite{pearl_2009,PhysicsPhysiqueFizika.1.195} is a relevant question common to all scientific disciplines. In recent years, much effort in the research on quantum information and foundations has been devoted to the study of quantum causal structures~\cite{Popescu1994,Beckman:2001aa,Bennett-2003,Eggeling:2002aa,PhysRevA.74.012305,Schumacher:2005aa,PhysRevA.72.062323,GutoskiWatrous-testers,PhysRevA.80.022339,PhysRevLett.101.060401}.
On the one hand, motivated by the longstanding problem of the unification of quantum mechanics and general relativity, some authors have considered quantum processes where operations are performed without a definite causal order~\cite{Hardy2009,PhysRevA.88.022318,Oreshkov2012aa,Brukner-qc,Oreshkov_2016,7867830,Perinotti2017,bisio2019theoretical,10.1145/3581760,arrighi2022quantum}. Treating the order in which the operations are performed as a quantum variable also provides an advantage in several informational tasks~\cite{PhysRevLett.113.250402,Milz2022resourcetheoryof,PhysRevLett.117.100502,Chiribella_2021,PhysRevA.86.040301}. On the other hand, a fully quantum version of causal models has been developed, motivated by the challenges of quantum nonlocality~\cite{Costa_2016,PhysRevX.7.031021,barrett2020quantumcausalmodels,BarrettNature2021}, e.g. the impossibility of causally explaining Bell-violating correlations within the framework of classical causal models without relying on some form of fine tuning~\cite{Wood_2015,PhysRevX.8.021018,Pearl2021classicalcausal}.

In the simplest communication setting, there are two parties, say Alice
and Bob, whose systems interact via a unitary channel, and the question boils down to identifying whether a cause-effect relation is induced between the input system of Alice and the output system of Bob and/or viceversa. A causal relation, which has been extensively studied in the literature~\cite{Beckman:2001aa,Eggeling:2002aa,Schumacher:2005aa}, refers to the ability of a party, say Alice, to communicate classical information to the other party, say Bob---a situation that is commonly referred to as \emph{signalling}. This characterisation makes it clear that a necessary condition for Alice to have a causal influence on Bob's system is the presence of an interaction. But when the two systems actually interact, an operation on the input on one side can also affect the correlations between the two parties at the outputs, right after the systems have been left to evolve via the unitary transformation representing the interaction.
The latter possibility---referred to as \emph{causal influence}---has been recently investigated and compared to the signalling condition~\cite{Perinotti2021causalinfluencein}. In quantum theory they have been proved to be equivalent,
in the following sense: a unitary gate does not allow Alice to signal to Bob if and only if it does not induce any causal influence from Alice to Bob.
This equivalence is not an incident of quantum theory, but a phenomenon that can be understood as a consequence of a property called \emph{no interaction without disturbance}. This property states that, whenever discarding Alice's system after the unitary is equivalent to Alice tracing out her system from the very beginning and Bob's system remaining unperturbed, then the unitary acts non-trivially on the Alice's system only---and consequently there is no interaction. It has been shown that this property guarantees coincidence of the no-signalling and no-causal influence conditions~\cite{Perinotti2021causalinfluencein}. The latter statement is thus true in all probabilistic theories that abide by it, e.g. Fermionic Theory and Real Quantum Theory. However, this is not a general fact, in the sense that there exist probabilistic theories where keeping the distinction is mandatory. The counterexample comes from classical information theory where the CNOT is no-signalling from the target to the control, and yet it mediates causal influence in the same direction~\cite{Perinotti2021causalinfluencein}, showing that these two causal relations are manifestly different.

To get a deeper comprehension of the origin of inequivalent causal effects one has to go beyond the qualitative picture described above, where only `yes' or `no' questions can be posed.
This is done introducing functions that quantify the amount of signalling a channel allows for~\cite{barsse2024causalinfluenceversussignalling,goswami2024maximumminimumcausaleffects}, and similarly for causal influence
\cite{barsse2024causalinfluenceversussignalling}. In~\cite{barsse2024causalinfluenceversussignalling}
the authors introduced two measures, $\Sigma(\tU)$ and $C(\tU)$,
defined on the set of unitary channels, called the signalling and
causal influence of $\cU$ respectively, which allowed to more closely study the relation between these two concepts in quantum theory. In
particular, there the following chain of inequalities was proved:
$\Sigma(\cU)\leq C(\cU) \leq2\sqrt{2}\Sigma(\cU)^{1/2}$. A remarkable
corollary is a continuity theorem, which states that if a unitary
channel activates ``small'' signaling, then the total causal influence
must also be small and viceversa. This indicates that in quantum
theory the equivalence between signalling and causal
influence still holds if instead of being null, they are both activated but they are
small. However, the non-linear bound $C(\cU) \leq2\sqrt{2}\Sigma(\cU)^{1/2}$ leaves room for
discrepancies between the magnitude of the two effects even in quantum theory.

To understand whether an actual difference between the two relations can also be observed in the quantum case, it is crucial to explicitly compute $C(\tU)$ and $\Sigma(\tU)$ for some cases of $\tU$.

In this paper we pursue the study of the two measures defined
in~\cite{barsse2024causalinfluenceversussignalling} to provide
definitive evidence of a gap between signalling and causal
influence in quantum theory.
In section \ref{sec:mainprop}, after reviewing the quantifiers of
signalling and causal influence, we prove that they both
satisfy i) continuity as functions on the set of
  unitary channels, and ii) monotonicity with respect to
  tensorization.
  In section \ref{sec:examples} we explicitly evaluate signalling and
  causal influence on two prototype unitary gates: the SWAP and the
  CNOT gate, both in the single shot and asymptotic (tensor product of
  an infinite number of copies) scenario. The SWAP is paradigmatic
  since it represents the most signalling channel among those having
  the same input and output systems. This is somewhat intuitive, in
  that it is an exchange of systems between the two parties. The CNOT
  gate is also of particular interest. In quantum theory, given its
  bidirectionality---control and target can be reversed by means of
  local Hadamard gates---it also allows for signalling from the target
  to the control. We discover that while its causal influence is
  the maximum that is achievable, the signalling is not, and is instead
  smaller, thus proving a gap between the two causal effects. In this
  section we also observe that in the asymptotic regime, the gap
  vanishes for the gates here considered, leaving room for asymptotic
  equivalence in general. 
  Finally, in section \ref{sec:conclusions} we draw conclusions and
discuss open problems.

\textbf{Notation.} In the following, we will use the notation $\rA,\rB
...$ to denote quantum systems. The underlying Hilbert space
corresponding to system $\rA$ will be written as $\mathcal{H}_\rA$, and
its dimension $d_\rA$. The composite system consisting of $\rA$ and $\rB$
will be denoted as $\rA\rB$, with Hilbert space
$\mathcal{H}_{\rA\rB}:=\mathcal{H}_\rA\otimes \mathcal{H}_\rB$. The set of
states over system $\rA$ (in their density matrix representation) will
be written as $\st(\rA)$. The set of quantum channels, i.e. completely
positive trace preserving maps, from $\rA$ to $\rB$ will be written as
$\mathfrak{C}(\rA,\rB)$. When the input and output systems are equal, we
will simply write $\mathfrak{C}(\rA)$. Let $\mathfrak{U}(\rA)$ denote the
set of unitary channels over $\rA$, that is, channels of the form $\rho
\mapsto U \rho U^{\dag}$ for some unitary matrix $U$. Finally, for any
state $\rho\in\st(\rA\rB)$, its marginal sate on system $\rA$ will be written as $\rho_\rA\coloneqq\Tr_\rB(\rho)$.
\section{Quantifying signalling and causal influence}\label{sec:mainprop}

The properties of signalling and causal influence were first defined in a discrete
way (a channel is either signalling or no-signalling, and does or does
not have causal influence)~\cite{Perinotti2021causalinfluencein}.  The
scenario, in a communication setting, consists of two parties, say
Alice and Bob, who share a unitary gate which represents the
interaction between them, and one is interested in establishing
whether a causal relation is induced between the Alice's input and
Bob's output. This situation can be pictorially represented as follows
\begin{align*}
\scalebox{1}{\tikzfig{nosignalling}},
\end{align*}
where $\mathcal{U}$ is a bipartite unitary channel shared between
Alice and Bob, $\rA$ and $\rB$ denote Alice and Bob input systems, while
$\rA'$ and $\rB'$ their output systems, respectively. In this setting, much of the focus on causal
effects of unitary channels has been on the possibility for Alice to
use them to communicate to Bob by varying her input state,
i.e. on establishing if the unitary $\tU$ is signalling from Alice to
Bob (or viceversa). The condition for signalling is expressed by negation.

\begin{definition}[no-signalling]\label{def:nos}
  A unitary channel
  $\tU\in\mathfrak{C}(\rA\rB,\rA'\rB')$ is \emph{no-signaling} from 
$\rA$ to $\rB'$ if
\begin{equation}
\scalebox{1}{\tikzfig{1bis}} \ = \ \scalebox{1}{\tikzfig{2bis}},
\label{eq:nos}
\end{equation}
where the symbol at the output of systems $\rA$ and $\rA'$ denotes the
trace operator, otherwise it is signalling from $\rA$ to $\rB'$. The channel is \emph{signalling} if it is not no-signalling.
\end{definition}

Indeed a bipartite channel can be used to signal from Alice to
Bob if and only if some local intervention on Alice's system can modify the outcome
probabilities of some local measurement of Bob's system, which is indeed forbidden
by the above condition~\eqref{eq:nos}. However, the notion of signalling
does not exhaust all the ways in which Alice can have a causal
influence on Bob.  Indeed, a local intervention of Alice could be
propagated by the evolution $\tU$ modifying the correlations between Bob
and Alice's output systems, and yet satisfying the criterion
in Eq.~\eqref{eq:nos}. This is why we introduce the next notion, that can be summarised 
by saying that a unitary evolution involving both Alice and Bob's systems does not
induce causal influence from $\rA$ to $\rB'$ if every local intervention of Alice remains 
localised on Alice's side.

\begin{definition}[no-causal influence]\label{def:noci}
A unitary channel $\tU\in\mathfrak{C}(\rA\rB,\rA'\rB')$
\emph{has no causal influence from $\rA$ to $\rB'$} if  for every $
\tA\in\QCs{\text{EA}}$ one has
\begin{equation}
 \begin{aligned}
    \Qcircuit @C=1em @R=1.5em
    {    &\ustick{\rE} & \qw & \qw & \multigate{1}{\tA} & \qw &\qw & \ustick{\rE}\qw \\
	&\ustick{\rA'}&\multigate{1}{\cU^{-1}}&\ustick{\rA}\qw&\ghost{\tA}&\ustick{\rA}\qw&\multigate{1}{\cU}&\ustick{\rA'}\qw\\
      &\ustick{\rB'}&\ghost{\cU^{-1}}&\qw&\ustick{\rB}\qw&\qw&\ghost{\cU}&\ustick{\rB'}\qw}
  \end{aligned}
\quad =
\begin{aligned}	
\Qcircuit @C=1em @R=1.5em
	{	&\ustick{\rE} & \multigate{1}{\tA'} & \ustick{\rE}\qw \\
		&\ustick{\rA'}&\ghost{\tA'}&\ustick{\rA'}\qw\\
		&&\ustick{\rB'}\qw&\qw
	}
	\end{aligned}\quad,
\label{eq:noci}
\end{equation}
where $\rE$ is an arbitrary ancillary system. A unitary channel \emph{has causal influence from $\rA$ to $\rB'$} if the above condition fails.
\end{definition}

Notice that in the definition we also include the possibility for Alice
to act on an extended system $\rA\rE$, for some arbitrary ancilla
$\rE$. The condition guarantees that the channel $\tU$ does not
influence Bob's correlations with any system, including ancillary
ones.

\subsection{Definition of the functions $\Sigma(\tU)$ and $C(\tU)$ on
  the space of unitary quantum channels}
\label{section:definitions}

In Ref~\cite{barsse2024causalinfluenceversussignalling}, precise quantifiers
of signalling and causal influence have been
introduced. Then, if a channel is signalling, we can precisely
quantify how much that channel can signal, and similarly for causal
influence.  We begin this section by reviewing the definition of
signalling and causal influence given
in~\cite{barsse2024causalinfluenceversussignalling}. The definitions
are directly based on their discrete versions: the closer we are
to the conditions in Eqs~\eqref{eq:nos} and~\eqref{eq:noci} the
less signalling or causal influence is activated by the channel, respectively.

To assess the amount of signalling, we look for the channel $\mathfrak{C}(B,B')$ that
best gauges the distance between the l.h.s. and the r.h.s.~of Eq.~\eqref{eq:nos}. The
distance is taken via the diamond norm, which has a clear
operational interpretation in terms of the success probability in
discriminating the channels.

\begin{definition}[signalling of $\tU$]
\label{def:signalling}
 Given any bipartite unitary channel $\mathcal{U}$, its
 \emph{signalling} from $A$ to $B'$ is quantified via the following function
 \begin{align*}
&  \Sigma:\mathfrak{U}(\rA\rA',\rB,\rB')\rightarrow \mathbb{R},\qquad  \tU \mapsto\Sigma(\tU)\coloneqq\\
&  \inf_{\mathcal{C}\in \mathfrak{C}(\rB,\rB')}\left\|\scalebox{1}{\tikzfig{1bis}}  -  \scalebox{1}{\tikzfig{2bis}}\ \right\|_{\diamond}.
 \end{align*}
\end{definition}

We will show in the next subsection that the infimum in the definition
can be replaced with a minimum. A unitary channel is then
no-signalling if and only if $\Sigma(\tU)=0$, in agreement with
Definition~\ref{def:nos}.

We now focus on causal influence. The condition expressed by
\eqref{eq:noci} is impractical since it requires to check an
infinite number of interventions $\tA$. However, this issue was circumvented
in~\cite{Perinotti2021causalinfluencein} where it is shown that one
needs to check what happens with the SWAP channel only, i.e. $\tU$ has no-causal
influence from $\rA$ to $\rB'$ if and only if the following equation
holds
\begin{equation}\label{eq:nocibis}
\tikzfig{3} \ = \ \tikzfig{5bis}.
\end{equation}

Based on this equivalence, the causal influence is defined in an
analogous way to the signalling as follows:
\begin{definition}[causal influence of $\tU$]
\label{def:causal_influence}
 Given a bipartite unitary channel $\mathcal{U}$, the causal influence
 of $\mathcal{U}$ from $\rA$ to $\rB'$ is measured  by the following function:
\onecolumn
 \begin{align*}
  C:&\mathfrak{U}(\rA\rA',\rB,\rB')\rightarrow \mathbb{R},\\
  &\tU \mapsto C(\tU) \coloneqq\inf_{\mathcal{C}\in \mathfrak{C}(\rA\rA')} 
                                          \left\|\scalebox{1}{\tikzfig{5}} -\scalebox{1}{\tikzfig{5bis}} \right\|_{\diamond}.
 \end{align*}
\twocolumn
\end{definition}

The above function $C$ evaluates how close a channel $\tU$ is to
satisfying the no-causal influence condition of
Definition~\ref{def:noci}, with no-causal influence corresponding to
$\mathcal{C}(\mathcal{U})=0$. Also in this case we will prove that
the infimum is actually the minimum.
We observe that by spelling out the definition of diamond norm we can rewrite signalling and causal influence of $\tU$ form $\rA$ to $\rB'$ as
\begin{align}
&  \Sigma(\mathcal{U})=\inf_{\mathcal{C}\in \mathfrak{C}(B,B')}
  \sup_{\rho \in \st(EAB)} \label{eq:sigminimax}\\
&  \qquad\left\| \bigl[\bigl(\mathcal{I}_E \otimes( \Tr_{A'}\otimes \mathcal{I}_{B'})\ \mathcal{U}\bigr) - \bigl(\mathcal{I}_E \otimes \Tr_{A}\otimes \mathcal{C} \bigl)\bigr](\rho) \right\|_1 \nonumber\\
&  \mathcal{C}(\mathcal{U})=\inf_{\mathcal{C}\in \mathfrak{C}(AA')}
  \sup_{\rho \in \st(EAA'B')}\label{eq:cinfminimax}\\
&\qquad  \left\| \bigl[(\mathcal{I}_E\otimes \mathcal{T}(\mathcal{U})) -  (\mathcal{I}_E\otimes \mathcal{C} \otimes \mathcal{I}_{B'})\bigr](\rho)\right\|_1.\nonumber
\end{align}
  
For any pair of quantum states $\sigma$ and $\rho$, the trace norm
$\|\sigma -\rho\|_1$ is twice the trace distance $D(\sigma, \rho)$
between those states, which lies in the interval $[0,1]$. In the
expressions of $\Sigma$ and $C$, the trace norm is evaluated over a
difference of quantum states, therefore we have the following
admissible ranges for the signalling and causal influence of a channel:
\begin{equation}\label{eq:ranges}
0\leq \Sigma(\mathcal{U})\leq 2,\qquad 0\leq C(\mathcal{U})\leq 2.
\end{equation}

\begin{remark}[Choice of diamond norm in $\Sigma(\tU)$ and $C(\tU)$]\emph{The diamond norm is motivated by its
  physical meaning, though other choices like the Hilbert-Schmidt norm could
  have been made. As long as we are dealing with finite dimensional
  systems, all norms are equivalent. However, different norms could
  give rise to different orderings, in the sense that a channel
  could have more signalling (causal influence) than another for one
  norm and viceversa for another norm. Since we are interested in
  comparing the magnitude of causal effects for different channels, it
  is convenient to rely on a distance with a clear operational interpretation.}
\end{remark}

The relation between signalling and causal influence has been studied in Refs.~\cite{Perinotti2021causalinfluencein,barsse2024causalinfluenceversussignalling}. In Ref.~\cite{Perinotti2021causalinfluencein}, using the discrete definition, it was shown that a quantum channel $\mathcal{U}$ is no-signalling if and only if it has no causal influence. 
This can be restated with Ref.~\cite{barsse2024causalinfluenceversussignalling}'s definitions as follows:
\begin{equation}\label{eq:discrete-equivalence}
 \Sigma(\mathcal{U})=0 \Leftrightarrow C(\mathcal{U})=0.
\end{equation}
This result was expanded upon in Ref.~\cite{barsse2024causalinfluenceversussignalling}, where the following bounds between causal influence and signalling were proved:
\begin{equation}
\label{eq:bound_gen}
 \Sigma(\mathcal{U}) \leq C(\mathcal{U})\leq 2\sqrt{2}\Sigma( \mathcal{U})^{\frac{1}{2}}.
\end{equation}
The latter provides the continuity of signalling vs causal influence
that, despite the quadratic scaling predicted in the upper bound, turn
on continuously when an interaction occurs: ``if signalling is small
then causal influence is also small, and viceversa''. The discrete
equivalence in Eq.~\eqref{eq:discrete-equivalence} is thus robust in
the regime of ``small casual effects''. However, the two quantities may
significantly differ in magnitude: the ratio between causal influence and signalling could 
in principle be unbounded even in the regime where both are small. The scope of next
section is to find a signature of the fact that causal influence can exceed signalling. To
further investigate the above relations, in the next section  we prove the main properties
of the two functions in Definitions~\ref{def:signalling} and~\ref{def:causal_influence}.

\subsection{Properties}
\label{section:properties}

In this section, we show the properties of signalling and
causal influence that will be used in the next section for their comparison.  
In the following we will use the useful quantities defined for every unitary $\tU$ in $\mathfrak{U}(\rA\rB,\rA'\rB')$ and every channel $\tC$ in $\mathfrak C(\rB,\rB')$ (resp. $\mathfrak{C}(\rA\rA')$): 
\begin{equation}\label{eq:spelled-defs-2}
  \begin{aligned}
&\Sigma_{\mathcal{C}}(\mathcal{U})\coloneqq  \left\| (\Tr_{\rA'}\otimes \mathcal{I}_{\rB'})\ \mathcal{U} -
                 \Tr_{\rA}\otimes \mathcal{C} \right\|_{\diamond},\\
&  C_{\mathcal{C}}(\mathcal{U})
     \coloneqq\left\|  \mathcal{T}(\mathcal{U}) -  \mathcal{C} \otimes
       \mathcal{I}_{\rB'}\right\|_{\diamond},
   \end{aligned}
   \end{equation}
so that we can write
\begin{align}\label{eq:spelled-defs}
&\Sigma(\mathcal{U})=\inf_{\mathcal{C}\in\mathfrak{C}(\rB,\rB')}
\Sigma_{\mathcal{C}}(\mathcal{U}),\qquad C(\mathcal{U})=\inf_{\mathcal{C}\in\mathfrak{C}(\rA\rA')}C_{\mathcal{C}}(\mathcal{U}).
\end{align}

We first show that the infimum and
supremum in the explicit definitions of $\Sigma$ and $C$ (see
Eqs.~\eqref{eq:spelled-defs} and~\eqref{eq:spelled-defs-2} are in fact a minimum and maximum, with
the latter being achieved on pure states.

\begin{lemma}
 \label{lem:infsup}
 Considering the expressions~\eqref{eq:sigminimax} and~\eqref{eq:cinfminimax}, the following statements hold.
\begin{enumerate}
\item \label{it:minmax}  the infimum and supremum are a minimum and maximum:
 \begin{align*}
   &  \Sigma(\mathcal{U}) = \min_{\mathcal{C}\in \mathfrak{C}(B,B')}
     \max_{\rho \in \st(EAB)} \\
   &\quad \| [(\mathcal{I}_E \otimes( \Tr_{A'}\otimes \mathcal{I}_{B'})\ \mathcal{U}) - (\mathcal{I}_E \otimes \Tr_{A}\otimes \mathcal{C})](\rho) \|_1,\\
   &  C(\mathcal{U}) = \min_{\mathcal{C}\in \mathfrak{C}(AA')} \max_{\rho
     \in \st(EAA'B')} \\
   &\quad\| [(\mathcal{I}_E\otimes \mathcal{T}(\mathcal{U})) -  (\mathcal{I}_E\otimes \mathcal{C} \otimes \mathcal{I}_{B'})](\rho)\|_1;
 \end{align*}
\item\label{it:convex}
the sets 
\begin{align*}
&O_\Sigma(\mathcal U)\coloneqq\argmin_{\mathcal{C}\in
                 \mathfrak{C}(B,B')} \max_{\rho \in \st(EAB)} \\
  &\quad\left\| \bigl[\bigl(\mathcal{I}_E \otimes( \Tr_{A'}\otimes \mathcal{I}_{B'})\ \mathcal{U}\bigr) - \bigl(\mathcal{I}_E \otimes \Tr_{A}\otimes \mathcal{C} \bigl)\bigr](\rho) \right\|_1, \\
&O_C(\mathcal{U}) \coloneqq \argmin_{\mathcal{C}\in \mathfrak{C}(AA')} \max_{\rho \in \st(EAA'B')} \\&\quad\left\| \bigl[(\mathcal{I}_E\otimes \mathcal{T}(\mathcal{U})) -  (\mathcal{I}_E\otimes \mathcal{C} \otimes \mathcal{I}_{B'})\bigr](\rho)\right\|_1,
\end{align*}
of optimal channels are convex;
\item\label{it:purmax}
the above maxima can always be attained with a pure state.
\end{enumerate}
\end{lemma}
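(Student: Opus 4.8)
The plan is to reduce all three claims to standard properties of the diamond norm combined with elementary compactness and convexity arguments, treating the signalling and the causal-influence cases in parallel. In both cases the quantity being optimised over states has the form $\|(\mathcal{I}_\rE\otimes\Lambda_\mathcal{C})(\rho)\|_1$, where $\Lambda_\mathcal{C}$ is the Hermiticity-preserving map appearing in~\eqref{eq:spelled-defs-2}: for signalling $\Lambda_\mathcal{C}=(\Tr_{\rA'}\otimes\mathcal{I}_{\rB'})\mathcal{U}-\Tr_{\rA}\otimes\mathcal{C}$, and for causal influence $\Lambda_\mathcal{C}=\mathcal{T}(\mathcal{U})-\mathcal{C}\otimes\mathcal{I}_{\rB'}$, so that $\Sigma_\mathcal{C}(\mathcal{U})=\|\Lambda_\mathcal{C}\|_{\diamond}$ and $C_\mathcal{C}(\mathcal{U})=\|\Lambda_\mathcal{C}\|_{\diamond}$. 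With this notation the inner supremum over $\rho$ is exactly the defining supremum of these diamond norms, and the outer infimum is over the channel $\mathcal{C}$.

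For the inner optimisation in items~\ref{it:minmax} and~\ref{it:purmax}, I would first invoke the standard fact that the diamond norm is already attained with an ancilla $\rE$ whose dimension equals that of the map's input, so that the supremum over $\rho$ may be restricted to the compact convex set of states on a fixed finite-dimensional space. On that set the function $\rho\mapsto\|(\mathcal{I}_\rE\otimes\Lambda_\mathcal{C})(\rho)\|_1$ is continuous, so the supremum is attained and is a maximum. For item~\ref{it:purmax} I would observe that this same function is convex, being the composition of the linear map $\mathcal{I}_\rE\otimes\Lambda_\mathcal{C}$ with the trace norm; a convex function on a compact convex set attains its maximum at an extreme point, and the extreme points of the state space are precisely the pure states, which is the claim.

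For the outer optimisation in item~\ref{it:minmax}, I would use that $\mathfrak{C}(\rB,\rB')$ (respectively $\mathfrak{C}(\rA\rA')$) is compact: identifying each channel with its Choi matrix realises the set as the positive semidefinite operators with a fixed partial trace, a closed and bounded subset of a finite-dimensional space. Since $\mathcal{C}\mapsto\Lambda_\mathcal{C}$ is affine and the diamond norm is a genuine norm, the map $\mathcal{C}\mapsto\Sigma_\mathcal{C}(\mathcal{U})$ (respectively $\mathcal{C}\mapsto C_\mathcal{C}(\mathcal{U})$) is continuous, hence attains its minimum on this compact set, so the infimum is a minimum. Finally, for item~\ref{it:convex}, the affinity of $\mathcal{C}\mapsto\Lambda_\mathcal{C}$ together with convexity of the diamond norm makes $\mathcal{C}\mapsto\Sigma_\mathcal{C}(\mathcal{U})$ a convex function on the convex set $\mathfrak{C}(\rB,\rB')$, and similarly for causal influence. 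The set of minimisers of a convex function over a convex domain is itself convex, since any convex combination of two optimal channels again lies in $\mathfrak{C}$ and, by convexity, cannot exceed the minimal value, hence is optimal; this gives convexity of $O_\Sigma(\mathcal{U})$ and $O_C(\mathcal{U})$.

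The only subtle point, and the step I would treat most carefully, is the reduction to a fixed finite ancilla dimension: the definitions quantify over an arbitrary ancilla $\rE$, so the compactness of the state space and the extreme-point argument used for items~\ref{it:minmax} and~\ref{it:purmax} are only legitimate after invoking the dimension bound for the diamond norm. Everything else is a routine combination of continuity with compactness for the extrema and of convexity for items~\ref{it:convex} and~\ref{it:purmax}.
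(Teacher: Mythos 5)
Your proposal is correct and follows essentially the same route as the paper: compactness of the state and channel sets plus continuity of the objective for the extrema, convexity of $\mathcal{C}\mapsto\|\Lambda_{\mathcal{C}}\|_{\diamond}$ for item~\ref{it:convex}, and a convexity/extreme-point argument for item~\ref{it:purmax}. The only differences are cosmetic (Choi-matrix compactness versus boundedness via $\|\mathcal{C}\|_{\diamond}=1$, Bauer's maximum principle versus the explicit pure-state decomposition), and your emphasis on the finite-ancilla dimension bound is the same point the paper defers to its appendix on diamond-norm properties.
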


\begin{proof}
  \ref{it:minmax}. First, to show that the supremum in
  $\Sigma(\mathcal{U})$ is a maximum, we prove that $\st(EAB)$ is
  compact and the function
  $\rho\mapsto \left\| \bigl[\bigl(\mathcal{I}_E \otimes(
    \Tr_{A'}\otimes \mathcal{I}_{B'})\ \mathcal{U}\bigr) -
    \bigl(\mathcal{I}_E \otimes \Tr_{A}\otimes \mathcal{C}
    \bigl)\bigr](\rho) \right\|_1$ is continuous. Indeed, the set of
  positive semi-definite matrices and set the of matrices of trace 1
  are closed, so their intersection $\st(EAB)$ is closed, and
  $\st(EAB)$ is also bounded. Therefore, $\st(EAB)$ is a compact set
  (as a subset of the finite-dimensional real vector space of
  Hermitian matrices of a given dimension). To prove continuity, we
  use the fact that
  $\rho\mapsto \bigl[\bigl(\mathcal{I}_E \otimes( \Tr_{A'}\otimes
  \mathcal{I}_{B'})\ \mathcal{U}\bigr) - \bigl(\mathcal{I}_E \otimes
  \Tr_{A}\otimes \mathcal{C} \bigl)\bigr](\rho)$ is continuous as a
  linear function over a finite-dimensional space and that
  $\|\cdot\|_{1}$ is continuous. The proof that the supremum in
  $C(\mathcal{U})$ is a maximum is straightforwardly similar.  We now
  prove that the infimum in
  $\inf_{\mathcal{C}\in \mathfrak{C}(B,B')}
  \Sigma_{\mathcal{C}}(\mathcal{U})$ is a minimum. To prove this, we
  show that $\mathfrak{C}(B,B')$ is a compact set and that
  $\mathcal{C}\mapsto \Sigma_{\mathcal{C}}(\mathcal{U})$ is
  continuous. First, the sets of completely positive maps from $B$ to
  $B'$ and of trace-preserving maps from $B$ to $B'$ are closed,
  therefore their intersection, the set of quantum channels, is
  closed. 
  Moreover, since the diamond norm of all quantum channels is 1
  (Lemma~\ref{lem:propdiamondkitaev}), $\mathfrak{C}(B,B')$ is bounded
  and is therefore a compact set (as a subset of a
  finite-dimensional real vector space consisting of
  Hermitian-preserving linear maps). Next, we show that
  $\mathcal{C}\mapsto \Sigma_{\mathcal{C}}(\mathcal{U})$ is
  continuous. For all pair of channels
  $\mathcal{C},\mathcal{D}\in\mathfrak{C}(B,B')$, we have:
\begin{align*}
&   \Sigma_{\mathcal{C}}(\mathcal{U}) =\left\| ( \Tr_{A'}\otimes \mathcal{I}_{B'})\ \mathcal{U} -  \Tr_{A}\otimes \mathcal{C} \right\|_{\diamond} \\
 &\leq  \left\| ( \Tr_{A'}\otimes \mathcal{I}_{B'})\ \mathcal{U} -
         \Tr_{A}\otimes \mathcal{D} \right\|_{\diamond}+\\
&\qquad \qquad \qquad \qquad \qquad \qquad \quad\:\:\left\|  \Tr_{A}\otimes \mathcal{D} -  \Tr_{A}\otimes \mathcal{C} \right\|_{\diamond} \\
& =   \Sigma_{\mathcal{D}}(\mathcal{U}) +\\
 &\qquad\sup_{\rho \in \st(EAB)} \left\| \bigl[ \mathcal{I}_E \otimes \Tr_{A}\otimes \mathcal{D} -  \mathcal{I}_E \otimes \Tr_{A}\otimes \mathcal{C}\bigr]  (\rho)\right\|_1 \\
 &\leq\Sigma_{\mathcal{D}}(\mathcal{U}) +\sup_{\rho \in \st(EAB)} \left\| \bigl[ \mathcal{I}_{EA}\otimes \mathcal{D} -  \mathcal{I}_{EA} \otimes \mathcal{C}\bigr]  (\rho)\right\|_1 \\
&=\Sigma_{\mathcal{D}}(\mathcal{U}) + \|\mathcal{D}-\mathcal{C}\|_{\diamond}.
\end{align*}
In the fourth line, we use the fact that the trace norm is contractive for the partial trace~\cite{nielsen_chuang_2010} and stable under tensor with the identity (see Proposition~\ref{prop:stab_id}). Therefore, $\Sigma_{\mathcal{C}}(\mathcal{U}) -\Sigma_{\mathcal{D}}(\mathcal{U}) \leq  \|\mathcal{D}-\mathcal{C}\|_{\diamond}$. Since the roles of $\mathcal{C}$ and $\mathcal{D}$ are symmetrical, we have $|\Sigma_{\mathcal{C}}(\mathcal{U}) -\Sigma_{\mathcal{D}}(\mathcal{U})| \leq \|\mathcal{D}-\mathcal{C}\|_{\diamond}$. Therefore, $\mathcal{C}\mapsto \Sigma_{\mathcal{C}}(\mathcal{U})$ is 1-Lipschitz continuous (and therefore continuous). The proof that the infimum $\inf_{\mathcal{C}\in \mathfrak{C}(AA')} C_{\mathcal{C}}(\mathcal{U})$ is a minimum is straightforwardly similar.

\ref{it:convex}. Let $\tC$ be a channel and set $\mathcal{L}_\tC\coloneqq( \Tr_{A'}\otimes \mathcal{I}_{B'})\ \mathcal{U} - \Tr_{A}\otimes \mathcal{C} $.  Let $\tC_1$, $\tC_2$ be two optimal channels, so that $\Sigma(\tU)=\|\cL_{\tC_1} \|_{\diamond}=\|\cL_{\tC_2} \|_{\diamond}$. By the triangle inequality, for every $p\in[0,1]$ we have
\[
\| \cL_{p\tC_1+(1-p)\tC_2} \|_{\diamond}\leq p\| \cL_{\tC_1} \|_{\diamond}+(1-p)\| \cL_{\tC_2} \|_{\diamond}=\Sigma(\tU),
\]
therefore also $p \,\tC_1+(1-p)\,\tC_2$ is optimal, as claimed.

\ref{it:purmax}. We finally prove that maxima can be attained with pure states. For the signalling (resp. the causal influence), let $\mathcal{L}\coloneqq\mathcal{I}_E \otimes( \Tr_{A'}\otimes \mathcal{I}_{B'})\ \mathcal{U} - \mathcal{I}_E \otimes \Tr_{A}\otimes \mathcal{C} $ (resp. $\mathcal{L}\coloneqq\mathcal{I}_E\otimes \mathcal{T}(\mathcal{U}) -  \mathcal{I}_E\otimes \mathcal{C} \otimes \mathcal{I}_{B'}$) and let $\rho$ be a state that maximizes $\left\| \mathcal{L}(\rho) \right\|_1 $. $\rho$ can be decomposed into a convex combination of pure states: $\rho=\sum_i \lambda_i \ket{\psi_i}\bra{\psi_i}$, where for all $i$, $\lambda_i>0$ and $\sum_i \lambda_i=1$. Therefore:
 \begin{align*}
   \|\mathcal{L}(\rho)\|_1&=\left\|\sum_i \lambda_i
                            \mathcal{L}(\ket{\psi_i}\bra{\psi_i})\right\|_1 \\ &
                            \leq  \sum_i\lambda_i
                            \|\mathcal{L}(\ket{\psi_i}\bra{\psi_i})\|_1
                                                                                 \leq \sum_i\lambda_i \|\mathcal{L}(\rho)\|_1 \\
                          &= \|\mathcal{L}(\rho)\|_1.
 \end{align*}
 \noindent because $\|\mathcal{L}(\rho)\|_1$ is maximal. Therefore, $\|\mathcal{L}(\ket{\psi_i}\bra{\psi_i})\|$ is also maximal for all $i$.
\end{proof}

We then show that $\Sigma$ and $C$ are continuous functions:
\begin{lemma}[continuity]
 \label{continuity}
 Signalling and causal influence are continuous functions from $\mathfrak{U}(AB,A'B')$ to $[0,2]$.
\end{lemma}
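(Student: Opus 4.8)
The plan is to prove the stronger statement that both $\Sigma$ and $C$ are Lipschitz continuous with respect to the diamond norm on $\mathfrak{U}(AB,A'B')$; since this space is finite dimensional, where all norms are equivalent, Lipschitz continuity in the diamond norm yields continuity in any topology. The strategy is to control, for each fixed auxiliary channel $\mathcal{C}$, how much $\Sigma_{\mathcal{C}}(\mathcal{U})$ and $C_{\mathcal{C}}(\mathcal{U})$ can change when $\mathcal{U}$ is perturbed, obtaining a Lipschitz bound with a constant independent of $\mathcal{C}$. I would then invoke the elementary fact that an infimum of a family of functions that are uniformly $L$-Lipschitz is itself $L$-Lipschitz: if $|f_{\mathcal{C}}(\mathcal{U}) - f_{\mathcal{C}}(\mathcal{V})| \le L\|\mathcal{U}-\mathcal{V}\|_{\diamond}$ for all $\mathcal{C}$, then $\inf_{\mathcal{C}} f_{\mathcal{C}}(\mathcal{U}) \le f_{\mathcal{C}}(\mathcal{V}) + L\|\mathcal{U}-\mathcal{V}\|_{\diamond}$ for every $\mathcal{C}$, and taking the infimum over $\mathcal{C}$ on the right, together with the symmetric inequality, transfers the bound to the infima. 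Since by Lemma~\ref{lem:infsup} these infima equal the values $\Sigma(\mathcal{U})$ and $C(\mathcal{U})$, continuity of the two functions follows.

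For the signalling, I would fix $\mathcal{C}\in\mathfrak{C}(B,B')$ and two unitary channels $\mathcal{U},\mathcal{V}$, and apply the triangle inequality for the diamond norm to obtain $\Sigma_{\mathcal{C}}(\mathcal{U}) \le \Sigma_{\mathcal{C}}(\mathcal{V}) + \|(\Tr_{A'}\otimes\mathcal{I}_{B'})(\mathcal{U}-\mathcal{V})\|_{\diamond}$. The extra term is bounded using that $\Tr_{A'}\otimes\mathcal{I}_{B'}$ is a channel and that the diamond norm is submultiplicative under composition with a channel of diamond norm $1$ (Lemma~\ref{lem:propdiamondkitaev}), giving $\|(\Tr_{A'}\otimes\mathcal{I}_{B'})(\mathcal{U}-\mathcal{V})\|_{\diamond} \le \|\mathcal{U}-\mathcal{V}\|_{\diamond}$. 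Exchanging the roles of $\mathcal{U}$ and $\mathcal{V}$ gives $|\Sigma_{\mathcal{C}}(\mathcal{U})-\Sigma_{\mathcal{C}}(\mathcal{V})| \le \|\mathcal{U}-\mathcal{V}\|_{\diamond}$ uniformly in $\mathcal{C}$, so $\Sigma$ is $1$-Lipschitz.

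For the causal influence the same scheme applies, but the dependence on $\mathcal{U}$ now enters through $\mathcal{T}(\mathcal{U})$, which by the SWAP reduction of Eq.~\eqref{eq:nocibis} has the form $\mathcal{T}(\mathcal{U})=\mathcal{U}\circ\mathcal{S}\circ\mathcal{U}^{-1}$ (up to the placement of systems), with $\mathcal{S}$ a fixed SWAP channel. The triangle inequality reduces the task to bounding $\|\mathcal{T}(\mathcal{U})-\mathcal{T}(\mathcal{V})\|_{\diamond}$, and this is the main obstacle, since $\mathcal{T}$ is not linear in $\mathcal{U}$: it contains both $\mathcal{U}$ and its inverse. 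I would handle it by a hybrid/telescoping step, writing $\mathcal{T}(\mathcal{U})-\mathcal{T}(\mathcal{V}) = \mathcal{U}\circ\mathcal{S}\circ(\mathcal{U}^{-1}-\mathcal{V}^{-1}) + (\mathcal{U}-\mathcal{V})\circ\mathcal{S}\circ\mathcal{V}^{-1}$, and then using submultiplicativity together with the identity $\mathcal{U}^{-1}-\mathcal{V}^{-1}=\mathcal{U}^{-1}\circ(\mathcal{V}-\mathcal{U})\circ\mathcal{V}^{-1}$ to get $\|\mathcal{U}^{-1}-\mathcal{V}^{-1}\|_{\diamond}\le\|\mathcal{U}-\mathcal{V}\|_{\diamond}$.

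This yields $\|\mathcal{T}(\mathcal{U})-\mathcal{T}(\mathcal{V})\|_{\diamond}\le 2\|\mathcal{U}-\mathcal{V}\|_{\diamond}$, hence $C$ is $2$-Lipschitz and in particular continuous. The only care needed is to verify that each factor appearing in the telescoping, namely $\mathcal{U}\circ\mathcal{S}$ and $\mathcal{S}\circ\mathcal{V}^{-1}$, is a genuine channel so that its diamond norm equals $1$; this holds because composites and inverses of unitary channels are again unitary channels. Together with the uniform-infimum argument of the first paragraph, this establishes continuity of both $\Sigma$ and $C$.
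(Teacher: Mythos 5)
Your proof is correct and follows essentially the same route as the paper's: the triangle inequality together with contractivity of the partial trace gives the $1$-Lipschitz bound for $\Sigma$, and the same hybrid/telescoping decomposition of $\mathcal{T}(\mathcal{U})-\mathcal{T}(\mathcal{V})$ handles $C$. The only (welcome) refinement is that where the paper merely invokes continuity of the matrix inversion map to finish, you bound $\|\mathcal{U}^{-1}-\mathcal{V}^{-1}\|_{\diamond}\le\|\mathcal{U}-\mathcal{V}\|_{\diamond}$ via the identity $\mathcal{U}^{-1}-\mathcal{V}^{-1}=\mathcal{U}^{-1}\circ(\mathcal{V}-\mathcal{U})\circ\mathcal{V}^{-1}$, thereby upgrading the conclusion for $C$ from plain continuity to $2$-Lipschitz continuity.
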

\begin{proof}
For the signalling, consider two unitary channels $\mathcal{U}$ and $\mathcal{V}$ of the same dimension. We have:
\begin{align*} 
 \Sigma(\mathcal{U}) =&\min_{\mathcal{C}\in \mathfrak{C}(\rB,\rB')}  \left\| ( \Tr_{A'}\otimes \mathcal{I}_{\rB'})\ \mathcal{U} -  \Tr_{\rA}\otimes \mathcal{C} \right\|_{\diamond}  \\
 \leq & \min_{\mathcal{C}\in \mathfrak{C}(\rB,\rB')} \bigl[\left\| (
        \Tr_{\rA'}\otimes \mathcal{I}_{\rB'})\ \mathcal{U} - (
        \Tr_{\rA'}\otimes \mathcal{I}_{\rB'})\
        \mathcal{V}\right\|_{\diamond} \\
  &+ \left\|( \Tr_{\rA'}\otimes \mathcal{I}_{\rB'})\ \mathcal{V} -  \Tr_{\rA}\otimes \mathcal{C} \right\|_{\diamond}  \bigr] \\
 =&  \left\| ( \Tr_{\rA'}\otimes \mathcal{I}_{\rB'})\ \mathcal{U} - (
    \Tr_{\rA'}\otimes \mathcal{I}_{\rB'})\
    \mathcal{V}\right\|_{\diamond} \\ &+\min_{\mathcal{C}\in \mathfrak{C}(\rB,\rB')} \left\|( \Tr_{\rA'}\otimes \mathcal{I}_{\rB'})\ \mathcal{V} -  \Tr_{\rA}\otimes \mathcal{C} \right\|_{\diamond}  
  \\
  \leq &  \|\mathcal{U}-\mathcal{V}\|_{\diamond} + \Sigma(\mathcal{V}).
\end{align*}

In the last inequality we used the fact that the trace norm is contractive for the partial trace~\cite{nielsen_chuang_2010}. Therefore $\Sigma(\mathcal{U})-\Sigma(\mathcal{V}) \leq \|\mathcal{U}-\mathcal{V}\|_{\diamond}$. Since the roles of $\mathcal{U}$ and $\mathcal{V}$ are symmetrical, we have $|\Sigma(\mathcal{U})-\Sigma(\mathcal{V})| \leq \|\mathcal{U}-\mathcal{V}\|_{\diamond}$. Therefore $\Sigma$ is 1-Lipschitz continuous with respect to the diamond norm (and therefore continuous).

For the causal influence, we can similarly show that for any pair of unitary channels $\mathcal{U}$ and $\mathcal{V}$, $
|C(\mathcal{U})-C(\mathcal{V})|\leq \|
\mathcal{T}(\mathcal{U})-\mathcal{T}(\mathcal{V})\|_{\diamond}$. Therefore,
it suffices to prove that $\mathcal{T}$ is continuous. First, the map
$A \mapsto A^{-1}$ over $\mathbb{GL}_n(\mathbb{C})$ is continuous for all
$n$. 
Let
$\mathcal{U}_0\in \mathfrak{U}(\rA\rB,\rA'\rB')$ and $\varepsilon
>0$. For all unitary channel $\mathcal{V}$ one has
\begin{align*}
  &\ \|\mathcal{T}(\mathcal{U}_0)-\mathcal{T}(\mathcal{V})\|_{\diamond}\\
  =&\ \|(\mathcal{I}_\rA\otimes\mathcal{U}_0) \circ
     (\mathcal{S}\otimes \mathcal{I}_\rB) \circ (\mathcal{I}_\rA
     \otimes\mathcal{U}^{-1}_0)  \\
  &- (\mathcal{I}_\rA\otimes\mathcal{V}) \circ (\mathcal{S}\otimes \mathcal{I}_\rB) \circ (\mathcal{I}_\rA \otimes\mathcal{V}^{-1})\|_{\diamond} \\
  \leq & \ \|(\mathcal{I}_\rA\otimes\mathcal{U}_0) \circ
         (\mathcal{S}\otimes \mathcal{I}_\rB) \circ (\mathcal{I}_\rA
         \otimes\mathcal{U}^{-1}_0) \\
  &- (\mathcal{I}_\rA\otimes\mathcal{U}_0) \circ (\mathcal{S}\otimes \mathcal{I}_\rB) \circ (\mathcal{I}_\rA \otimes\mathcal{V}^{-1})\|_{\diamond}  \\
  &+ \|(\mathcal{I}_\rA\otimes\mathcal{U}_0) \circ
    (\mathcal{S}\otimes \mathcal{I}_\rB) \circ (\mathcal{I}_\rA
    \otimes\mathcal{V}^{-1}) \\
  & - (\mathcal{I}_A\otimes\mathcal{V}) \circ (\mathcal{S}\otimes \mathcal{I}_\rB) \circ (\mathcal{I}_\rA \otimes\mathcal{V}^{-1})\|_{\diamond} \\
  \leq & \ \|(\mathcal{I}_\rA\otimes\mathcal{U}_0) \circ( \mathcal{S}\otimes \mathcal{I}_\rB)\|_{\diamond} \ \| \mathcal{I}_\rA \otimes\mathcal{U}^{-1}_0 - \mathcal{I}_\rA \otimes\mathcal{V}^{-1}\|_{\diamond}  \\
  & + \|\mathcal{I}_\rA\otimes\mathcal{U}_0 - \mathcal{I}_\rA\otimes\mathcal{V}\|_{\diamond} \ \| (\mathcal{S}\otimes \mathcal{I}_\rB) \circ (\mathcal{I}_A \otimes\mathcal{V}^{-1})\|_{\diamond} \\
  \leq & \ \| \mathcal{I}_\rA \otimes\mathcal{U}^{-1}_0 - \mathcal{I}_\rA \otimes\mathcal{V}^{-1}\|_{\diamond} +  \|\mathcal{I}_\rA\otimes\mathcal{U}_0 - \mathcal{I}_\rA\otimes\mathcal{V}\|_{\diamond}\\  
  =& \ \|\mathcal{U}_0^{-1} - \mathcal{V}^{-1}\|_{\diamond} + \|\mathcal{U}_0 - \mathcal{V}\|_{\diamond},
\end{align*}
where in the third line we use the fact that the diamond norm is
multiplicative with respect to tensor product, in the fourth line we
use the fact that the diamond norm of quantum channels is 1, and in
the last line we use the stability of $\|\cdot\|_{\diamond}$ under
tensor with the identity (see Proposition~\ref{prop:stab_id}). 
Therefore, by choosing $\mathcal{V}$
such that
$\|\mathcal{U}_0^{-1} - \mathcal{V}^{-1}\|_{\diamond}\leq
\frac{\varepsilon}{2}$ and
$\|\mathcal{U}_0 - \mathcal{V}\|_{\diamond} \leq
\frac{\varepsilon}{2}$ (which is possible since the inversion map is
continuous), we have
$
\|\mathcal{T}(\mathcal{U}_0)-\mathcal{T}(\mathcal{V})\|_{\diamond}\leq
\varepsilon$, which concludes the proof.
\end{proof}

\subsubsection{Monotonicity for tensor products of unitary channels
  and asymptotic quantities}

Another basic question on functions $\Sigma$,
$C$ of~Eqs.~\eqref{eq:spelled-defs} is if they are monotone with respect
to the tensor product and sequential composition. In this section we
prove that signalling and causal influence are both monotonically
increasing with respect to tensor product of unitaries. On the
other hand they do not exhibit monotonicity for
sequential composition, as observed at the end of next Section in Remark~\ref{rem:monotonicity-sequential}.

Let $\mathcal{U}$ and $\mathcal{V}$ be two bipartite unitary
channels. Let $\rA_1,\rB_1$ denote Alice and Bob's respective input
systems for $\mathcal{U}$ and $\rA_1',\rB_1'$ their output systems. Let
$\rA_2,\rB_2,\rA_2',\rB_2'$ similarly denote Alice and Bob's input and output
systems for $\mathcal{V}$. When considering the quantities
$\Sigma(\mathcal{U}\otimes\mathcal{V})$ and
$C(\mathcal{U}\otimes\mathcal{V})$, we assume that Alice has input
system $\rA\coloneqq \rA_1\rA_2$ and output system $\rA'\coloneqq\rA_1'\rA_2'$, while Bob has
input system $\rB\coloneqq\rB_1\rB_2$ and output system $\rB'\coloneqq\rB_1'\rB_2'$. Notice that
$\Sigma(\mathcal{U}\otimes\mathcal{V})=
\Sigma(\mathcal{V}\otimes\mathcal{U})$ and
$C(\mathcal{U}\otimes\mathcal{V})=C(\mathcal{V}\otimes\mathcal{U})$. We
can then prove monotonicity as follows:

\begin{lemma}[monotonicity]
\label{lem:tensor}
 Given two unitary channels $\mathcal{U}$ and $\mathcal{V}$ one has $\Sigma(\mathcal{U}\otimes\mathcal{V})\geq \Sigma(\mathcal{U})$ and $C(\mathcal{U}\otimes\mathcal{V})\geq C(\mathcal{U})$.
\end{lemma}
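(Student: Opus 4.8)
The plan is to prove both inequalities with one and the same reduction scheme. Given the composite gate $\mathcal{U}\otimes\mathcal{V}$, I want to recover the single-use problem for $\mathcal{U}$ by feeding a fixed product state into the input systems belonging to $\mathcal{V}$ and discarding $\mathcal{V}$'s output systems. Since a state preparation and a partial trace are themselves quantum channels, and the diamond norm is submultiplicative under composition with $\|\cdot\|_{\diamond}=1$ on channels (Lemma~\ref{lem:propdiamondkitaev}), such pre- and post-processing can only decrease the diamond-norm distance. Taking the infimum over all feasible composite channels then yields the bound, so I never need the minimum of Lemma~\ref{lem:infsup} to be attained: it suffices to show $\Sigma_{\mathcal{C}}(\mathcal{U}\otimes\mathcal{V})\ge\Sigma(\mathcal{U})$ (resp.\ $C_{\mathcal{C}}(\mathcal{U}\otimes\mathcal{V})\ge C(\mathcal{U})$) for every $\mathcal{C}$.

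For signalling, I would first use the factorization $(\Tr_{\rA_1'\rA_2'}\otimes\mathcal{I}_{\rB_1'\rB_2'})(\mathcal{U}\otimes\mathcal{V})=[(\Tr_{\rA_1'}\otimes\mathcal{I}_{\rB_1'})\mathcal{U}]\otimes[(\Tr_{\rA_2'}\otimes\mathcal{I}_{\rB_2'})\mathcal{V}]$. Fix any $\mathcal{C}\in\mathfrak{C}(\rB_1\rB_2,\rB_1'\rB_2')$ and any normalized state $\omega$ on $\rA_2\rB_2$, and define $\mathcal{C}_1\in\mathfrak{C}(\rB_1,\rB_1')$ by preparing the marginal $\Tr_{\rA_2}\omega$ on the input $\rB_2$, applying $\mathcal{C}$, and tracing out $\rB_2'$. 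Pre-composing the difference map $\mathcal{L}_{\mathcal{C}}$ defining $\Sigma_{\mathcal{C}}(\mathcal{U}\otimes\mathcal{V})$ with the preparation of $\omega$ on $\rA_2\rB_2$ and post-composing with $\Tr_{\rB_2'}$ turns its first term into $(\Tr_{\rA_1'}\otimes\mathcal{I}_{\rB_1'})\mathcal{U}$ (because $(\Tr_{\rA_2'}\otimes\mathcal{I}_{\rB_2'})\mathcal{V}$ sends $\omega$ to a fixed state on $\rB_2'$ that is then traced away) and its second term into $\Tr_{\rA_1}\otimes\mathcal{C}_1$; the result is exactly the difference map defining $\Sigma_{\mathcal{C}_1}(\mathcal{U})$. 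Hence $\Sigma_{\mathcal{C}}(\mathcal{U}\otimes\mathcal{V})\ge\Sigma_{\mathcal{C}_1}(\mathcal{U})\ge\Sigma(\mathcal{U})$, and the infimum over $\mathcal{C}$ gives $\Sigma(\mathcal{U}\otimes\mathcal{V})\ge\Sigma(\mathcal{U})$.

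For causal influence the only extra ingredient I need is the multiplicativity $\mathcal{T}(\mathcal{U}\otimes\mathcal{V})=\mathcal{T}(\mathcal{U})\otimes\mathcal{T}(\mathcal{V})$, up to a reordering of subsystems. I would verify it by writing $\mathcal{T}(\mathcal{W})=(\mathcal{I}\otimes\mathcal{W})(\mathcal{S}\otimes\mathcal{I})(\mathcal{I}\otimes\mathcal{W}^{-1})$ and using that the swap on $\rA_1\rA_2$ factorizes as $\mathcal{S}_{\rA_1}\otimes\mathcal{S}_{\rA_2}$ while $(\mathcal{U}\otimes\mathcal{V})^{-1}=\mathcal{U}^{-1}\otimes\mathcal{V}^{-1}$. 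Then I repeat the reduction verbatim: for any $\mathcal{C}\in\mathfrak{C}(\rA_1\rA_2\rA_1'\rA_2')$ I prepare a fixed state on the $\mathcal{V}$-block systems and trace out the corresponding outputs, which sends $\mathcal{T}(\mathcal{U})\otimes\mathcal{T}(\mathcal{V})$ to $\mathcal{T}(\mathcal{U})$ and $\mathcal{C}\otimes\mathcal{I}_{\rB_1'\rB_2'}$ to $\mathcal{C}_1\otimes\mathcal{I}_{\rB_1'}$ with $\mathcal{C}_1\in\mathfrak{C}(\rA_1\rA_1')$. Contractivity then gives $C_{\mathcal{C}}(\mathcal{U}\otimes\mathcal{V})\ge C_{\mathcal{C}_1}(\mathcal{U})\ge C(\mathcal{U})$, and the infimum over $\mathcal{C}$ closes the argument.

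The routine content is the diamond-norm bookkeeping; the two places demanding care are (i) the identity $\mathcal{T}(\mathcal{U}\otimes\mathcal{V})=\mathcal{T}(\mathcal{U})\otimes\mathcal{T}(\mathcal{V})$, which must be checked against the precise definition of $\mathcal{T}$ and the grouping of the swap ancillas, and (ii) confirming that the chosen preparation and partial trace act as genuine quantum channels, so that submultiplicativity of $\|\cdot\|_{\diamond}$ together with $\|\cdot\|_{\diamond}=1$ on channels and stability under tensoring with the identity (Proposition~\ref{prop:stab_id}) applies. I expect the main obstacle to be the bookkeeping itself: tracking which subsystems are prepared and which are discarded so that both terms of each difference reduce to the single-use expressions \emph{simultaneously}.
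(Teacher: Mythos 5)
Your proposal is correct and follows essentially the same route as the paper: the paper's proof also reduces $\Sigma(\mathcal{U}\otimes\mathcal{V})$ to $\Sigma(\mathcal{U})$ by preparing an arbitrary state on the $\mathcal{V}$-block inputs and tracing out the corresponding outputs, invoking monotonicity of the (diamond/trace) norm under these operations, and treats the causal-influence case as analogous. Your additional explicit construction of the induced channel $\mathcal{C}_1$ and the check that $\mathcal{T}(\mathcal{U}\otimes\mathcal{V})=\mathcal{T}(\mathcal{U})\otimes\mathcal{T}(\mathcal{V})$ up to reordering are sound and simply make precise what the paper leaves implicit.
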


\begin{proof}
 We detail the proof in the case of the signalling (the proof for the causal influence is similar).
\onecolumn
 \begin{align*}
  \Sigma(\mathcal{U}\otimes\mathcal{V})&=\min_{\mathcal{C}\in\mathfrak{C}(\rB,\rB')} \left\| \tikzfig{6} \ -\ \tikzfig{6bis}\right\|_{\diamond}\\
                                        &\geq
                                          \min_{\mathcal{C}\in\mathfrak{C}(\rB,\rB')}
                                          \left\| \tikzfig{7}-\tikzfig{7bis}\right\|_{\diamond}=\Sigma(\mathcal{U}),
 \end{align*}
 \twocolumn
 where we used the monotonicity of the trace norm with respect to the
 partial trace and with respect to the preparation of an arbitrary
 state $\mathcal{\sigma}\in\st(\rA_2\rB_2)$.
\end{proof}

 Notice that, by definition, $\Sigma(\tU\otimes\tI)\leq\Sigma(\tU)$
 and similarly $C(\tU\otimes\tI)\leq C(\tU)$, thus the above lemma implies
 that both signalling and causal influence are stable under
 tensorisation with the identity. We can  define the asymptotic limit of the signalling and causal influence of tensor powers of unitary channels.
\begin{equation}\label{eq:asymptotic}
 \begin{aligned}
 \Sigma_{\infty}(\mathcal{U})&\coloneqq \lim_{n\rightarrow +\infty}\Sigma(\mathcal{U}^{\otimes n}), \\
 C_{\infty}(\mathcal{U})&\coloneqq \lim_{n\rightarrow +\infty}C(\mathcal{U}^{\otimes n}).
\end{aligned}
\end{equation}
Notice that, as a consequence of Lemma~\ref{lem:tensor},
$\left(\Sigma(\mathcal{U}^{\otimes n})\right)_{n\in\mathbb{N}}$ and
$\left(C(\mathcal{U}^{\otimes n})\right)_{n\in\mathbb{N}}$ are
non-decreasing sequences in $[0,2]$, which ensures that their limits
are well defined.

\section{Examples}
\label{sec:examples}

In this section, we compute signalling and
causal influence for two prototypical channels, (i) the SWAP and the
(ii) controlled NOT (or CNOT) unitary gates. We also study the
signalling and causal influence of their tensor powers, which allows
us to determine the corresponding asymptotic quantities, as defined in
the previous section.

\subsection{The SWAP gate}

The first example we will study is the SWAP gate. The SWAP gate $S$ is defined as the unitary map over $\mathcal{H}_\rA\otimes\mathcal{H}_\rB$ that inverts the order of tensor factors:
\begin{equation*}
  S(\ket{\psi}\otimes\ket{\varphi})= \ket{\varphi}\otimes\ket{\psi}
  \quad \text{for all
  }\ket{\psi}\in\mathcal{H}_\rA,\ket{\varphi}\in\mathcal{H}_\rB.
\end{equation*}
The corresponding quantum channel is written as
$\mathcal{S}\coloneqq[\rho\mapsto S \rho S] $. We study signalling and
causal influence of the SWAP channel. We will always assume that
Alice's input and output systems are $\rA$ and $\rB$ respectively, and
Bob's $\rB$ and $\rA$ (generally, the dimensions of systems $\rA$ and $\rB$ are different): 
\ctikzfig{swap2}

We begin with causal influence, whose evaluation is by far easier than that of signalling, and we show that $C(\mathcal{S})$ is maximal:

\begin{theorem}[Causal influence of the SWAP for arbitrary dimension]
\label{thm:causinfl_swap}
Let $\tS$ denote the SWAP between two systems $\rA$ and $\rB$ of any arbitrary finite dimension, then $C(\mathcal{S})=2$.
\end{theorem}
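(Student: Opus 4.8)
The plan is to first make the channel $\mathcal{T}(\mathcal{S})$ completely explicit, and then to exhibit, for every comparison channel $\mathcal{C}$, a single input state on which $\mathcal{T}(\mathcal{S})$ and $\mathcal{C}\otimes\mathcal{I}_{\rB'}$ produce orthogonal (hence perfectly distinguishable) outputs. First I would compute $\mathcal{T}(\mathcal{S})$. Recall from the proof of Lemma~\ref{continuity} that $\mathcal{T}(\mathcal{U})=(\mathcal{I}_\rE\otimes\mathcal{U})\circ(\mathcal{S}_{\rE\rA}\otimes\mathcal{I}_\rB)\circ(\mathcal{I}_\rE\otimes\mathcal{U}^{-1})$, a channel on $\rE\rA'\rB'$, where $\rE$ is a reference copy of $\rA$ (so $d_\rE=d_\rA=d_{\rB'}$). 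Since the SWAP is its own inverse, $\mathcal{S}^{-1}=\mathcal{S}$, and tracking basis vectors $\ket{r}_\rE\ket{x}_{\rA'}\ket{y}_{\rB'}$ through the three layers gives $\ket{r}_\rE\ket{x}_{\rA'}\ket{y}_{\rB'}\mapsto\ket{y}_\rE\ket{x}_{\rA'}\ket{r}_{\rB'}$. Hence $\mathcal{T}(\mathcal{S})=\mathcal{S}_{\rE\rB'}\otimes\mathcal{I}_{\rA'}$: it is simply the SWAP exchanging the reference $\rE$ with Bob's output $\rB'$, acting trivially on $\rA'$.

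The upper bound $C(\mathcal{S})\le 2$ is immediate from~\eqref{eq:ranges}. For the matching lower bound, fix any $\mathcal{C}\in\mathfrak{C}(\rA\rA')$ and feed the product input $\omega=\proj{0}_\rE\otimes\proj{\chi}_{\rA'}\otimes\proj{1}_{\rB'}$, where $\ket{0},\ket{1}$ are orthogonal vectors of $\mathcal{H}_{\rB'}\cong\mathcal{H}_\rE$ (which exist as soon as $d_\rA\ge 2$) and $\ket{\chi}$ is arbitrary. Applying $\mathcal{T}(\mathcal{S})=\mathcal{S}_{\rE\rB'}\otimes\mathcal{I}_{\rA'}$ moves the content of $\rE$ into $\rB'$, so the output has its $\rB'$ factor equal to $\proj{0}$. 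On the other hand $\mathcal{C}\otimes\mathcal{I}_{\rB'}$ never touches $\rB'$, so its output has $\rB'$ factor $\proj{1}$. The two output states therefore live in the orthogonal sectors $\proj{0}_{\rB'}$ and $\proj{1}_{\rB'}$, so they have orthogonal supports and trace-norm distance exactly $2$. As this holds for every $\mathcal{C}$, the value of $\left\|\mathcal{T}(\mathcal{S})-\mathcal{C}\otimes\mathcal{I}_{\rB'}\right\|_{\diamond}$ is $2$ for each $\mathcal{C}$, and taking the infimum yields $C(\mathcal{S})=2$.

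The only genuinely delicate step is the bookkeeping that identifies $\mathcal{T}(\mathcal{S})$ with a bare SWAP between $\rE$ and $\rB'$: one must keep straight that the SWAP channel sends the content of $\rA$ to $\rB'$ and of $\rB$ to $\rA'$, that $\mathcal{S}^{-1}=\mathcal{S}$, and that the inner SWAP of the gadget acts on $\rE\rA$ rather than on $\rE\rB$. Once $\mathcal{T}(\mathcal{S})$ is in hand, the distinguishing argument is elementary and requires no entanglement with an ancilla. I would also flag the degenerate case $d_\rA=1$: there $\rA$ and $\rB'$ are trivial, $\mathcal{T}(\mathcal{S})=\mathcal{I}$, and $C(\mathcal{S})=0$, so the statement is to be read for nontrivial Alice, $d_\rA\ge 2$, which is the only regime in which causal influence from $\rA$ to $\rB'$ can be nonzero.
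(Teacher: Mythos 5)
Your proposal is correct and follows essentially the same route as the paper: the paper likewise proves the lower bound by feeding in a product state whose gadget-ancilla and $\rB'$ components have orthogonal support, so that $\mathcal{T}(\mathcal{S})$ and $\mathcal{C}\otimes\mathcal{I}_{\rB'}$ yield perfectly distinguishable outputs on $\rB'$ for every $\mathcal{C}$. The only difference is presentational: you compute $\mathcal{T}(\mathcal{S})=\mathcal{S}_{\rE\rB'}\otimes\mathcal{I}_{\rA'}$ explicitly (and flag the trivial $d_\rA=1$ case), where the paper simply reads the same fact off the diagram.
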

\begin{proof}

We have:
\begin{align*}
&C(\mathcal{S})=\min_{\mathcal{C}\in \mathfrak{C}(\rA\rB)} \max_{\rho
  \in\st(\rE\rA\rB\rA)}\\
  &\left\| \scalebox{0.8}{\tikzfig{9}} - \scalebox{0.8}{\tikzfig{9bis}} \ \right\|_1.
 \end{align*}
 We prove the equality by showing that for every channel
 $\mathcal{C}$, there exists a state $\rho$ such that the above norm
 is equal to 2. 
 Looking at the diagram, it is evident that one can choose $\rho$ as a
 product state whose second and fourth components (both states of
 $\rA$) have orthogonal support.
\end{proof}
 
Consequently, by inequality~\ref{eq:bound_gen}, we know that
$\Sigma(\mathcal{S})\geq \frac{1}{2}$. We now study signalling of the
SWAP gate, and we show that it depends on the dimension $d_\rA$ of
system $\rA$. We first prove that the minimum in the expression of
$\Sigma(\mathcal{S})$ is attained for the depolarizing channel:

\begin{lemma}[Optimal channel for $\Sigma(\mathcal{S})$ and simplified formula]
\label{lem:depolarizing_opt}
The minimum in
$\Sigma(\mathcal{S})=\min_{\mathcal{C}\in\mathfrak{C}(\rB,\rA)}
\Sigma_{\mathcal{C}}(\mathcal{S})$ is attained for the depolarizing
channel $\mathcal{D}:\rho \mapsto \Tr[\rho] I/d_{\rA}$. Moreover, the
expression of $\Sigma(\mathcal{S})$ can be simplified as follows:
 \begin{align}\label{eq:sigSWAP_simpl}
  \Sigma(\mathcal{\mathcal{S}})=\left\| \mathcal{I}_\rA-\mathcal{D} \right\|_\diamond= \max_{\rho \in \st(EA)} \left\|\rho - \rho_E \otimes \frac{I}{d_A} \right\|_1. 
 \end{align}

\end{lemma}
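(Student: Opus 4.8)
The plan is to prove the two assertions in turn: first that the depolarizing channel $\mathcal{D}$ attains the minimum over $\mathcal{C}\in\mathfrak{C}(\rB,\rA)$, and then to read off the simplified formula by evaluating $\Sigma_{\mathcal{D}}(\mathcal{S})$ directly. I would begin by making the two maps compared in $\Sigma_{\mathcal{C}}(\mathcal{S})$ explicit. Since the SWAP exchanges the two factors, the map $(\Tr_{\rA'}\otimes\mathcal{I}_{\rB'})\,\mathcal{S}$ keeps the $\rA$-marginal of the input and outputs it on $\rB'$ (which has dimension $d_{\rA}$), while $\Tr_{\rA}\otimes\mathcal{C}=\mathcal{C}\circ\Tr_{\rA}$ discards $\rA$ and feeds the $\rB$-marginal into $\mathcal{C}$. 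Writing $\Phi_1$ and $\Phi_2(\mathcal{C})$ for these two maps, one has $\Sigma_{\mathcal{C}}(\mathcal{S})=\|\Phi_1-\Phi_2(\mathcal{C})\|_{\diamond}$, and $\mathcal{C}\mapsto\Phi_2(\mathcal{C})$ is affine.

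The core of the argument is a symmetry of this expression under the unitary group on $\rA$. For any unitary channel $\mathcal{W}$ on $\rA\cong\rB'$, I would conjugate the difference $\Phi_1-\Phi_2(\mathcal{C})$ by $\mathcal{W}^{-1}$ on the input $\rA$ and by $\mathcal{W}$ on the output $\rB'$; since pre- and post-composition with unitary channels leave the diamond norm invariant, this does not change $\Sigma_{\mathcal{C}}(\mathcal{S})$. One then checks that the conjugation fixes $\Phi_1$ (it merely transports the $\rA$-wire to $\rB'$, so the output $\mathcal{W}$ cancels the input $\mathcal{W}^{-1}$, nothing being applied on the $\rB$-branch that is traced out) and sends $\Phi_2(\mathcal{C})$ to $\Phi_2(\mathcal{W}\circ\mathcal{C})$ (the input $\mathcal{W}^{-1}$ on $\rA$ is annihilated by $\Tr_{\rA}$, while the output $\mathcal{W}$ composes onto $\mathcal{C}$). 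This yields the key identity $\Sigma_{\mathcal{W}\circ\mathcal{C}}(\mathcal{S})=\Sigma_{\mathcal{C}}(\mathcal{S})$ for every channel $\mathcal{C}$ and every $\mathcal{W}$.

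Next I would exploit that averaging any channel over the output unitary group collapses it to $\mathcal{D}$: for a unitary $1$-design $\{W_j\}_{j=1}^{m}$ on $\rA$ (e.g.\ the generalized Pauli/Weyl operators) one has $\frac{1}{m}\sum_j\mathcal{W}_j\circ\mathcal{C}=\mathcal{D}$ for any channel $\mathcal{C}$, because $\frac{1}{m}\sum_j W_j X W_j^{\dagger}=\Tr[X]\,I/d_{\rA}$. Combining this with the convexity of $\mathcal{C}\mapsto\Sigma_{\mathcal{C}}(\mathcal{S})$ (immediate from the triangle inequality for $\|\cdot\|_{\diamond}$, as in item~\ref{it:convex} of Lemma~\ref{lem:infsup}) and the symmetry identity gives, for every $\mathcal{C}$, $\Sigma_{\mathcal{D}}(\mathcal{S})\le\frac{1}{m}\sum_j\Sigma_{\mathcal{W}_j\circ\mathcal{C}}(\mathcal{S})=\Sigma_{\mathcal{C}}(\mathcal{S})$. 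Hence $\mathcal{D}$ is a minimizer, proving the first claim.

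The simplified formula then follows by substitution: with $\mathcal{C}=\mathcal{D}$ the second map becomes the constant channel $\rho\mapsto I/d_{\rA}$, so on any input on $\rE\rA\rB$ the difference $\Phi_1-\Phi_2(\mathcal{D})$ acts as $\rho\mapsto\rho_{\rE\rA}-\rho_{\rE}\otimes I/d_{\rA}$; the system $\rB$ enters only through a trace and is irrelevant, so the optimization reduces to $\max_{\rho\in\st(\rE\rA)}\|\rho-\rho_{\rE}\otimes I/d_{\rA}\|_1$, which is exactly $\|\mathcal{I}_{\rA}-\mathcal{D}\|_{\diamond}$. The main obstacle I anticipate is the careful bookkeeping of the swapped and relabelled systems when setting up the symmetry identity — in particular verifying that the chosen conjugation genuinely fixes $\Phi_1$ while transforming $\Phi_2(\mathcal{C})$ into $\Phi_2(\mathcal{W}\circ\mathcal{C})$. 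Once this is correct, the optimality of $\mathcal{D}$ is essentially forced, since output-unitary averaging erases all dependence on $\mathcal{C}$.
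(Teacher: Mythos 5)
Your proposal is correct and follows essentially the same route as the paper: you establish the unitary-covariance identity $\Sigma_{\mathcal{W}\circ\mathcal{C}}(\mathcal{S})=\Sigma_{\mathcal{C}}(\mathcal{S})$, then average over the output unitary group and use convexity of $\mathcal{C}\mapsto\Sigma_{\mathcal{C}}(\mathcal{S})$ to collapse an arbitrary $\mathcal{C}$ to the depolarizing channel, exactly as in the paper's proof. The only (cosmetic) difference is that you average over a finite unitary $1$-design whereas the paper integrates over the Haar measure; both give $\Sigma_{\mathcal{D}}(\mathcal{S})\leq\Sigma_{\mathcal{C}}(\mathcal{S})$ via the same twirling identity, and the concluding simplification to $\max_{\rho\in\st(\rE\rA)}\|\rho-\rho_\rE\otimes I/d_\rA\|_1$ matches the paper's.
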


\begin{proof}
Consider the expression of $\Sigma(\mathcal{S})$ in Eq.~\eqref{eq:spelled-defs}, where
$\Sigma_{\mathcal{C}}(\mathcal{S})$ can be rearranged as follows:
\begin{align*}
  \Sigma_{\mathcal{C}}(\mathcal{S}) &=  \left\| ( \Tr_{\rB}\otimes
                                      \mathcal{I}_{\rA})\
                                      \mathcal{S} -
                                      \bigl(
                                      \Tr_{\rA}\otimes \mathcal{C}
                                      \bigl) \right\|_{\diamond}\\
  &=\left\| \mathcal{I}_{\rA} \otimes \Tr_\rB  - \Tr_{\rA}\otimes \mathcal{C} \right\|_{\diamond}.
\end{align*}
Our goal is to show that
$\Sigma_{\mathcal{C}}(\mathcal{S}) \geq
\Sigma_{\mathcal{D}}(\mathcal{S}) $. Let $\mathcal{U}$ be a unitary
channel over system $\rA$. We have:
\begin{align*}
  & \Sigma_{\mathcal{C}}(\mathcal{S})= \max_{\rho \in\st(\rE\rA\rB)} \| [\mathcal{I}_{\rE}\otimes
    \mathcal{U} \otimes \Tr_\rB  \\
  &\qquad \qquad \qquad\qquad - \mathcal{I}_\rE \otimes \Tr_{\rA}\otimes (\mathcal{U} \circ \mathcal{C})](\rho) \|_1 \\
  &= \max_{\rho \in\st(\rE\rA\rB)} \| [\mathcal{I}_{\rE}\otimes
    \mathcal{U} \otimes \Tr_\rB  \\
   &\qquad \qquad \qquad \qquad - \mathcal{I}_\rE \otimes (\Tr_{\rA}\circ \mathcal{U})\otimes (\mathcal{U} \circ \mathcal{C})](\rho) \|_1\\
  &= \max_{\rho \in\st(\rE\rA\rB)} \| [\mathcal{I}_{\rE\rA} \otimes
    \Tr_\rB  - \mathcal{I}_\rE \otimes \Tr_{\rA}\otimes (\mathcal{U}
    \circ \mathcal{C})](\rho) \|_1 \\
  & = \Sigma_{\mathcal{U} \circ \mathcal{C}}(\mathcal{S}),
\end{align*}
where in the first line we used the fact that the trace norm is
preserved under unitary transformations, in the second line we used
the fact that $\mathcal{U}$ is trace-preserving and in the third line
we used the fact that states of the form $(\mathcal{I}_{\rE\rB}\otimes
\mathcal{U})(\rho)$ range over $\st(\rE\rA\rB)$ as $\rho$ ranges over
$\st(\rE\rA\rB)$. Let $\mathbb{U}(d)$ denote the unitary group of degree $d$. Given a unitary matrix $U$, let $\mathcal{U}$ denote its corresponding unitary channel $U(\cdot)U^{\dag}$. We integrate $\Sigma_{\mathcal{C}}(\mathcal{S})$ over the unitary group $\mathbb{U}(d_A)$ with respect to the Haar measure. On the one hand:
\[
 \int_{\mathbb{U}(d_A)}\Sigma_{\mathcal{C}}(\mathcal{S})dU =\Sigma_{\mathcal{C}}(\mathcal{S}).
\]
 On the other hand, using the previous computation we also have:
\begin{align*}
  &\int_{\mathbb{U}(d_\rA)}\Sigma_{\mathcal{C}}(\mathcal{S})dU  \\
  &= \int_{\mathbb{U}(d_\rA)} \max_{\rho \in\st(\rE\rA\rB)} \|
    [\mathcal{I}_{\rE\rA} \otimes \Tr_\rB \\
 & \qquad \qquad\qquad \qquad - \mathcal{I}_\rE \otimes \Tr_{\rA}\otimes (\mathcal{U} \circ \mathcal{C})](\rho) \|_1  dU\\
  &\geq \max_{\rho \in\st(\rE\rA\rB)} \int_{\mathbb{U}(d_\rA)} \|
    [\mathcal{I}_{\rE\rA} \otimes \Tr_\rB \\
  &\qquad \qquad\qquad \qquad - \mathcal{I}_\rE \otimes \Tr_{\rA}\otimes (\mathcal{U} \circ \mathcal{C})](\rho) \|_1  dU \\
  &\geq \max_{\rho \in\st(\rE\rA\rB)} \left\|
    \int_{\mathbb{U}(d_\rA)}  [\mathcal{I}_{\rE\rA} \otimes \Tr_\rB
    \right.\\
& \qquad \qquad\qquad \qquad\left. - \mathcal{I}_\rE \otimes \Tr_{\rA}\otimes (\mathcal{U} \circ \mathcal{C})](\rho)\  dU \right\|_1 \\
  &= \max_{\rho \in\st(\rE\rA\rB)} \| [\mathcal{I}_{\rE\rA} \otimes
    \Tr_\rB ](\rho)\\
  &\qquad \qquad\qquad \qquad -[ \mathcal{I}_\rE \otimes \Tr_{\rA}\otimes (\frac{I}{d_\rA}\circ\Tr_\rB  \circ \mathcal{C})](\rho) \|_1 \\
  &= \max_{\rho \in\st(\rE\rA\rB)} \|   [\mathcal{I}_{\rE\rA}
    \otimes \Tr_\rB ](\rho) \\
  &\qquad \qquad\qquad \qquad- [ \mathcal{I}_\rE \otimes \Tr_{\rA}\otimes (\frac{I}{d_\rA}\circ\Tr_\rB) ](\rho)  \|_1\\
  &=\Sigma_{\mathcal{D}}(\mathcal{S}),
\end{align*}
where we used the well-known identity $\int_{\mathbb{U}(d_A)}\tU=\tD$, and the fact that $\mathcal{C}$ is trace-preserving. In conclusion, for all
channel $\mathcal{C}\in\mathfrak{C}(B,A)$,
$\Sigma_{\mathcal{C}}(\mathcal{S}) \geq
\Sigma_{\mathcal{D}}(\mathcal{S})$, which completes the first part of
the proof.

The expression of $\Sigma(\mathcal{S})$ can now be computed using the
depolarizing channel as follows:
 \begin{align*}
   &  \Sigma(\mathcal{\mathcal{S}})
= \max_{\rho \in\st(\rE\rA\rB)} \left\| \bigl[\mathcal{I}_{\rE\rA}  \otimes \Tr_\rB - \mathcal{I}_\rE \otimes \Tr_{\rA} \otimes \mathcal{D}\bigr] (\rho) \right\|_1 \nonumber \\
   &=\max_{\rho \in\st(\rE\rA\rB)}
     \left\|\scalebox{0.8}{\tikzfig{10}}\ -\
     \scalebox{0.8}{\tikzfig{10bis}}\right\|_1\\
   &=\max_{\rho \in\st(\rE\rA)} \left\|\scalebox{0.8}{\tikzfig{11}}\ -\ \scalebox{0.8}{\tikzfig{11bis}}\right\|_1,
 \end{align*}
which proves Eq.~\eqref{eq:sigSWAP_simpl}. 
\end{proof}

We notice that, in the two last lines (see Appendix~\ref{app:norms}
for the properties of the diamond norm), $\rE$ can be chosen to be isomorphic to
$\rA$ instead of $\rA\rB$. This additionally shows that
$\Sigma(\mathcal{S})$ does not depend on the dimension of Bob's input
system $d_\rB$. We now calculate $\Sigma(\mathcal{S})$.

\begin{theorem}[Signalling of $\mathcal{S}$ for arbitrary dimension]\label{thm:signalling-swap}
Let $\tS$ denote the SWAP between two systems $\rA$ and $\rB$ of any arbitrary finite dimension, then one has
\begin{align}
\Sigma(\mathcal{S})=2 (d_\rA^2 - 1)/d_\rA^2.
\end{align}
Moreover, the maximum in
Eq.~\eqref{eq:spelled-defs-2} can be achieved via any maximally entangled state.
\end{theorem}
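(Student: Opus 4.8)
The plan is to build directly on the simplified formula~\eqref{eq:sigSWAP_simpl} from Lemma~\ref{lem:depolarizing_opt},
\[
\Sigma(\mathcal S)=\max_{\rho\in\st(\rE\rA)}\left\|\rho-\rho_\rE\otimes\frac{I}{d_\rA}\right\|_1,
\]
where by the remark following that lemma $\rE$ may be taken isomorphic to $\rA$, so $d_\rE=d_\rA=:d$, and by the convexity argument of item~\ref{it:purmax} of Lemma~\ref{lem:infsup} the maximum is attained on a pure state $\rho=\ket{\psi}\bra{\psi}$. Writing $\ket{\psi}$ in Schmidt form $\ket{\psi}=\sum_{i=1}^{d}\sqrt{\lambda_i}\,\ket{i}_\rE\ket{i}_\rA$ with $\lambda_i\ge0$ and $\sum_i\lambda_i=1$ gives $\rho_\rE=\sum_i\lambda_i\ket{i}\bra{i}$, so the task reduces to maximizing $\|M\|_1$, with $M\coloneqq\ket{\psi}\bra{\psi}-\rho_\rE\otimes I/d$, over all probability vectors $\lambda$.

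The decisive step is a block decomposition of $M$ in the product basis $\{\ket{i}_\rE\ket{j}_\rA\}$. Splitting the index pairs into the diagonal set $\{(i,i)\}$ and the off-diagonal set $\{(i,j):i\ne j\}$, one checks that $\ket{\psi}\bra{\psi}$ is supported on the diagonal block only, while $\rho_\rE\otimes I/d$ is diagonal in this basis; hence $M$ is block diagonal across the two sets. On the off-diagonal block $M$ equals $-\tfrac1d\,\mathrm{diag}(\lambda_i)$ and contributes $\sum_{i\ne j}\lambda_i/d=(d-1)/d$ to the trace norm, independently of $\lambda$. On the diagonal block $M$ restricts to the $d\times d$ matrix $M_D=\ket{v}\bra{v}-\tfrac1d\Lambda$, where $\ket{v}=\sum_i\sqrt{\lambda_i}\ket{i}$ and $\Lambda=\mathrm{diag}(\lambda_i)$.

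Next I would pin down the spectrum of $M_D$ just enough to compute its trace norm. Since $\tfrac1d\Lambda\succeq0$ we have $M_D\preceq\ket{v}\bra{v}$, and because the latter has rank one, eigenvalue monotonicity gives $\lambda_2(M_D)\le\lambda_2(\ket{v}\bra{v})=0$; together with $\Tr M_D=(d-1)/d>0$ this shows $M_D$ has exactly one positive eigenvalue $\mu_1$. Consequently $\|M_D\|_1=2\mu_1-\Tr M_D$, and adding the off-diagonal contribution collapses everything to the clean identity $\|M\|_1=2\mu_1$. It then remains to maximize $\mu_1=\lambda_{\max}(M_D)$ over probability vectors. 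Writing $\mu_1=\max_{\|x\|=1}\bigl(|\langle x|v\rangle|^2-\tfrac1d\langle x|\Lambda|x\rangle\bigr)$, taking $x$ real and nonnegative, and setting $t_i=\sqrt{\lambda_i}$ and $w_i=x_it_i\ge0$, the objective becomes $\bigl(\sum_i w_i\bigr)^2-\tfrac1d\sum_i w_i^2$. Cauchy--Schwarz gives $\sum_i w_i^2\ge\bigl(\sum_i w_i\bigr)^2/d$ and $\sum_i w_i=\langle x|t\rangle\le1$, so the objective is at most $1-1/d^2=(d^2-1)/d^2$, with equality exactly when $x=t$ and all $w_i$ coincide, i.e.\ $\lambda_i=1/d$. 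Hence $\Sigma(\mathcal S)=2\mu_1^{\max}=2(d_\rA^2-1)/d_\rA^2$, attained by any maximally entangled state, all of which have uniform Schmidt spectrum and therefore the same trace norm.

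The main obstacle is the reduction from the full $d$-parameter spectral problem for $M$ to the single scalar $\mu_1$: it hinges on recognizing the block structure, on the fact that the off-diagonal block always contributes exactly $(d-1)/d$, and on the rank-one bound forcing $M_D$ to have a single positive eigenvalue. Once the identity $\|M\|_1=2\mu_1$ is in place, the remaining maximization is an elementary Cauchy--Schwarz exercise, and verifying that the maximally entangled state saturates both inequalities is immediate.
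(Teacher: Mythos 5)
Your proof is correct, but it takes a genuinely different route from the paper's. Both arguments start from the simplified formula of Lemma~\ref{lem:depolarizing_opt}, $\Sigma(\mathcal S)=\|\mathcal I_\rA-\mathcal D\|_\diamond$. The paper then observes that the identity and the depolarizing channel are random-unitary mixtures over the \emph{same} orthogonal unitary basis (with weight vectors $(1,0,\dots,0)$ and $(1/d^2,\dots,1/d^2)$ respectively) and invokes the known formula of Lemma~\ref{lem:distance-random-unitaries} from Ref.~\cite{PhysRevA.71.062340}, which delivers both the value and the optimality of the maximally entangled input in two lines. You instead carry out a self-contained spectral computation: Schmidt-decompose an arbitrary pure input, exploit the block structure of $M=\ket{\psi}\bra{\psi}-\rho_\rE\otimes I/d$ across the diagonal subspace $\mathrm{span}\{\ket{ii}\}$ and its complement, observe that the off-diagonal block always contributes $(d-1)/d$, use $M_D\preceq\ket v\bra v$ and Weyl monotonicity to show $M_D$ has a single positive eigenvalue so that $\|M\|_1=2\lambda_{\max}(M_D)$, and finish with Cauchy--Schwarz. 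All the steps check out (in particular the additivity of the trace norm over the two orthogonal blocks and the identity $\|M_D\|_1=2\mu_1-\Tr M_D$). What your argument buys is independence from the external lemma and a sharper statement: your equality analysis shows that the maximum is attained \emph{exactly} on states with uniform Schmidt spectrum, i.e.\ the maximally entangled states are the only optimizers, whereas the paper only asserts achievability. The cost is length; the paper's reduction to the random-unitary distance formula is the more economical path if one is willing to cite that result.
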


\begin{proof}
  In Lemma~\ref{lem:depolarizing_opt} we proved that $\Sigma(\mathcal{S})$
  corresponds to the diamond norm distance between the identity and
  the depolarizing channel, which can be realized as random unitaries
  from the same set of unitary transformations as follows:
\begin{align*}
\mathcal{D}(\cdot) =\frac{1}{d_\rA^2}\sum_{i=1}^{d_\rA^2} U_i(\cdot) U_i^\dag,
\end{align*}
where $\{U_i\}_{i=1}^{d^2_\rA}$ is an arbitrary orthogonal basis
($\Tr[U_iU_j^\dag]=\delta_{ij}$) of unitary matrixes for the unitary group
$\mathbb{U}(d_A)$ that includes the identity. Therefore, we can apply
  Lemma~\ref{lem:distance-random-unitaries} concluding the proof.
\end{proof}

\begin{remark}[The case of $\mathcal{S}^{\otimes n}$]\emph{ The above
    proposition also allows us to calculate the SWAP asymptotic
    signalling and causal influence~\eqref{eq:asymptotic}. Notice that
    $\mathcal{S}^{\otimes n}$ is simply a SWAP channel where $d_A$ was
    replaced by $d_A^n$ and $d_B$ by $d_B^n$. Since
    $2 (d_A^2 - 1)/d_A^2$ converges to 2 for $d_A\rightarrow \infty$,
    we have $\Sigma_{\infty}(\mathcal{S})=2$. We also have
    $C_{\infty}(\mathcal{S})=2$ because causal influence of the
    SWAP is maximal regardless of the dimension
    (Theorem~\ref{thm:causinfl_swap}).}
\end{remark}

We can now compare the SWAP to other quantum channels. Since the SWAP
exchanges the two systems, one would expect its signalling to be
maximal. Although we have seen that $\Sigma(\mathcal{S})<2$ regardless
of the system's dimension, it is indeed maximal, in the sense that the
signalling of the SWAP gate is greater than the signalling of any
other channel for which Bob's output system $\rB'$ has the same
dimension as $\rA$.  For this reason, its signalling $\Sigma(\tS)$
represents a benchmark value that we can use to compare our results on
other unitary channels. We prove the maximality of the SWAP's
signalling in the following lemma:

\begin{lemma}[The SWAP channel is the ``most signalling'' channel]\label{lem:swap-most-sig}

 Let $\mathcal{U}$ be a bipartite unitary channel with input systems $\rA,\rB$ and output systems $\rA',\rB'$, and $d_{\rA'}=d_\rB$, $d_{\rB'}=d_\rA$. Then
$\Sigma(\mathcal{U})\leq\Sigma(\mathcal{S})$, where the $\mathcal{S}$ is a SWAP channel.

\end{lemma}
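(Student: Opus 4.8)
The plan is to bound $\Sigma(\tU)$ from above by exhibiting a \emph{single} explicit competitor in the minimization defining $\Sigma(\tU)$, and to show that even this suboptimal choice already does no worse than $\Sigma(\tS)$. Since the hypothesis gives $d_{\rB'}=d_\rA$, the completely depolarizing channel $\tD\colon\sigma\mapsto\Tr[\sigma]\,I_{\rB'}/d_\rA$ is a legitimate element of $\mathfrak{C}(\rB,\rB')$, so I would take $\tC=\tD$ and use $\Sigma(\tU)\le\Sigma_{\tD}(\tU)$.

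Next I would unfold $\Sigma_{\tD}(\tU)$ via the spelled-out diamond norm. Writing $\tN\coloneqq(\Tr_{\rA'}\otimes\tI_{\rB'})\,\tU$ for the marginal channel $\rA\rB\to\rB'$, the term $\tI_\rE\otimes\Tr_\rA\otimes\tD$ sends any $\rho\in\st(\rE\rA\rB)$ to $\rho_\rE\otimes I_{\rB'}/d_\rA$, so that
\[
\Sigma_{\tD}(\tU)=\max_{\rho\in\st(\rE\rA\rB)}\left\|(\tI_\rE\otimes\tN)(\rho)-\rho_\rE\otimes\frac{I_{\rB'}}{d_\rA}\right\|_1.
\]
The key observation is that $\xi\coloneqq(\tI_\rE\otimes\tN)(\rho)$ is a state on $\rE\rB'$ whose marginal on $\rE$ is exactly $\rho_\rE$, because $\tN$ is trace preserving (so $\Tr_{\rB'}\circ\tN=\Tr_{\rA\rB}$). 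Hence every term in the maximization has the form $\|\xi-\xi_\rE\otimes I_{\rB'}/d_\rA\|_1$, and therefore
\[
\Sigma_{\tD}(\tU)\le\sup_{\rE}\ \max_{\xi\in\st(\rE\rB')}\left\|\xi-\xi_\rE\otimes\frac{I_{\rB'}}{d_\rA}\right\|_1.
\]

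Finally I would identify the right-hand side with $\Sigma(\tS)$. Using $\rB'\cong\rA$ (again from $d_{\rB'}=d_\rA$), this is precisely the simplified expression for the SWAP signalling established in Lemma~\ref{lem:depolarizing_opt}, Eq.~\eqref{eq:sigSWAP_simpl}. The one point needing care --- which I expect to be the main obstacle --- is that in my bound the reference $\rE$ is a priori arbitrary, whereas the SWAP formula restricts it. To close this gap I would argue that $\xi\mapsto\|\xi-\xi_\rE\otimes I_{\rB'}/d_\rA\|_1$ is convex in $\xi$, so the maximum is attained at a pure $\xi$, and then a Schmidt decomposition across the cut $\rE\,|\,\rB'$ shows its $\rE$-support can be taken of dimension at most $d_{\rB'}=d_\rA$; this is exactly the reference-system reduction already noted after Lemma~\ref{lem:depolarizing_opt}. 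Chaining the three displays then yields $\Sigma(\tU)\le\Sigma_{\tD}(\tU)\le\Sigma(\tS)$, as claimed.
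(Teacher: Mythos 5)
Your proposal is correct and follows essentially the same route as the paper: both bound $\Sigma(\tU)$ by the single competitor $\tD$, use trace preservation of $\tU$ to reduce the resulting expression to $\max_{\xi\in\st(\rE\rB')}\|\xi-\xi_\rE\otimes I_{\rB'}/d_{\rB'}\|_1$, and identify this with $\Sigma(\tS)$ via Lemma~\ref{lem:depolarizing_opt} (the paper phrases the middle step as the change of variables $\rho\mapsto(\tI_\rE\otimes\tU)(\rho)$, which ranges over all of $\st(\rE\rA'\rB')$, rather than as your subset inclusion, but this is the same argument). Your handling of the reference-system reduction $\rE\cong\rB'$ matches the remark the paper makes after Lemma~\ref{lem:depolarizing_opt}.
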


\begin{proof}
We consider the upper bound on $\Sigma(\mathcal{U})=\min_{\mathcal{C}\in\mathfrak{C}(B,B')} \Sigma_{\mathcal{C}}(\mathcal{U})$ obtained when $\mathcal{C}$ is the depolarizing channel, which we show is equal to the signalling of the SWAP transformation $\Sigma(\mathcal{S})$.
 \begin{align*}
   &\Sigma(\mathcal{U}) 
   \leq \max_{\rho \in \st(\rE\rA\rB)} \|
     [(\mathcal{I}_\rE \otimes( \Tr_{\rA'}\otimes
     \mathcal{I}_{\rB'})\ \mathcal{U}) \\
& \qquad\qquad\qquad\qquad- (\mathcal{I}_\rE \otimes \Tr_{\rA}\otimes \frac{I}{d_{\rB'}}\circ\Tr_\rB' )](\rho) \|_1 \\
  &= \max_{\rho \in \st(\rE\rA\rB)} \| [(\mathcal{I}_\rE \otimes( \Tr_{\rA'}\otimes \mathcal{I}_{\rB'})\ \mathcal{U}\bigr) \\
&\qquad\qquad\qquad\qquad- (\mathcal{I}_\rE \otimes (\Tr_{\rA'}\otimes \frac{I}{d_{\rB'}} \circ\Tr_{\rB'})\mathcal{U})](\rho)\|_1 \\
  &=\max_{\rho \in \st(\rE\rA'\rB')} \| [(\mathcal{I}_\rE \otimes \Tr_{\rA'}\otimes \mathcal{I}_{\rB'}) \\
  &\qquad\qquad\qquad\qquad- (\mathcal{I}_\rE \otimes \Tr_{\rA'}\otimes \frac{I}{d_{\rB'}}\circ\Tr_{\rB'})](\rho) \|_1 \\
  &=\max_{\rho \in \st(\rE\rB')} \| [\mathcal{I}_{\rE\rB'} -
    (\mathcal{I}_\rE \otimes  \frac{I}{d_{\rB'}}\circ\Tr_{\rB'}
    )](\rho)\|_1\\
   &=\max_{\rho \in \st(\rE\rB')} \| \rho - \rho_\rE
    \otimes \frac{I}{d_{\rB'}} \|_1 = \Sigma(\mathcal{S}).
  \end{align*}

  In the second line, we use the fact that $\mathcal{U}$ is
  trace-preserving while in the third line we use the fact that states
  of the form $(\mathcal{I}_E\otimes \mathcal{U})(\rho)$ range over
  $\st(\rE\rA'\rB')$ as $\rho$ ranges over $\st(\rE\rA\rB)$. In the
  fourth line we may assume that $\rE\cong \rB'$ (see Appendix~\ref{app:norms}
for the properties of the diamond norm).
\end{proof}

Combining the two previous results, we deduce that for any unitary
channel $\mathcal{U}$, we have $\Sigma(\mathcal{U})<2$, differently
from causal influence which can achieve value 2
(e.~g.~$C(\mathcal{S})=2$ as proved in
Theorem~\ref{thm:causinfl_swap}). Nonetheless, we have seen in
Theorem~\ref{thm:signalling-swap} and subsequent remark that
$\Sigma(\mathcal{S})$ can be arbitrarily close to 2 by choosing
systems of high enough dimension.

\begin{remark}[Gap between causal influence and signalling for SWAP]
\emph{We observe that, thanks to Lemma~\ref{lem:swap-most-sig} we conclude
  that both causal influence and signalling are maximal for the SWAP
  gate. In this case we cannot infer an actual difference between
  these two causal relations, even if they numerically differ. In
  other words we have not found a channel with maximal causal
  influence but limited signalling, which would imply unequivocally
  the existence of causal effects that cannot be understood in terms
  of (cannot be due to) the signalling power of the channel. This will be
  instead the case for the CNOT gate studied in the next section.}
\end{remark}

\subsection{The controlled NOT gate}

In this section, we study another example: the 2-qubit CNOT gate defined as follows: 
\[
  \text{CNOT}(\ket{a}\otimes\ket{b})= \ket{a\oplus b}\otimes \ket{b}
  \]
for all computational basis vector $\ket{a}\otimes \ket{b}$, $a,b=0,1$, where $\oplus$ is addition modulo 2, the fist qubit is called the
\emph{target} qubit and the second is called the \emph{control}
qubit. We will assume that Alice's input and output systems correspond
to the first qubit $\mathcal{H}_\rA=\mathcal{H}_{\rA'}=\mathbb{C}^2$, and
Bob's to the second $\mathcal{H}_\rB=\mathcal{H}_{\rB'}=\mathbb{C}^2$.

We write the corresponding quantum channel as
$\mathcal{C}_X\coloneqq[\rho\mapsto \text{CNOT}\ \rho \ \text{CNOT}] $, and diagrammatically
\begin{equation*}
\scalebox{1}{\tikzfig{cnot}}.
\end{equation*}

We study the signalling and causal influence for arbitrary tensor powers of
the CNOT channel, namely $\mathcal{C}_X^{\otimes n}$
($n\geq 1$) acting on $2n$ qubits. Like previously, each CNOT has
input and output systems $\rA,\rB,\rA',\rB'$, each corresponding to a
single qubit. Alice's overall input (resp. output) system is given by
the tensor product of the input (resp. output) systems of each
individual CNOT, that is $\rA^{\otimes n}$ (resp. $\rA'^{\otimes n}$),
and similarly for Bob.

We begin by calculating the causal influence of the CNOT, which we
show is maximal like in the case of the SWAP gate:

\begin{theorem}[Causal influence of $\mathcal{C}_X^{\otimes n}$]
\label{thm:causinfl_cnot}
Let $\tC$ denote the \emph{CNOT} of qubits, then one has $C(\mathcal{C}_X^{\otimes n})=2$
for all $n\geq 1$.
\end{theorem}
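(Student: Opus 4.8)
The plan is to reduce the statement for all $n$ to the single-copy case and then to exhibit, for every candidate channel $\mathcal{C}$, an explicit input state that saturates the trace distance. For the reduction, note that monotonicity (Lemma~\ref{lem:tensor}) applied with $\mathcal{U}=\mathcal{C}_X^{\otimes(n-1)}$ and $\mathcal{V}=\mathcal{C}_X$ gives $C(\mathcal{C}_X^{\otimes n})\geq C(\mathcal{C}_X^{\otimes(n-1)})$, so by iteration $2\geq C(\mathcal{C}_X^{\otimes n})\geq\cdots\geq C(\mathcal{C}_X)$, where the first inequality is the universal range bound of Eq.~\eqref{eq:ranges}. Hence it suffices to prove $C(\mathcal{C}_X)=2$, and the general statement follows.

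For the single copy I would proceed exactly as in the proof of Theorem~\ref{thm:causinfl_swap}: spelling out $C(\mathcal{C}_X)=\min_{\mathcal{C}\in\mathfrak{C}(\rA\rA')}\max_{\rho}\|[\mathcal{I}_E\otimes\mathcal{T}(\mathcal{C}_X)-\mathcal{I}_E\otimes\mathcal{C}\otimes\mathcal{I}_{\rB'}](\rho)\|_1$, which is at most $2$, and proving the reverse inequality by producing, for each $\mathcal{C}$, a state $\rho$ whose two images are orthogonal. Since the CNOT is self-inverse, $\mathcal{T}(\mathcal{C}_X)=(\mathcal{I}\otimes\mathcal{C}_X)\circ(\mathcal{S}\otimes\mathcal{I})\circ(\mathcal{I}\otimes\mathcal{C}_X)$ is a unitary channel whose underlying unitary $V$ is a $\rB'$-controlled gate on Alice's two registers: conditioned on the computational value of $\rB'$ it applies either the internal swap $\mathcal{S}$ of the reference with $\rA'$, or that swap followed by $X\otimes X$.

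The key step is then to feed $\rB'$ with one half of a maximally entangled pair $\ket{\Phi^+}_{\rB'E}=\tfrac{1}{\sqrt2}(\ket{00}+\ket{11})$ shared with the ancilla $E$, while preparing Alice's reference and output registers in the $X$-basis product $\ket{-}_{\rA}\ket{+}_{\rA'}$, which the internal swap sends into the $(-1)$-eigenspace of $X\otimes X$. A direct check then shows that $\mathcal{T}(\mathcal{C}_X)$ leaves Alice's two registers in a fixed pure product state and, by phase kickback, rotates the shared pair into $\ket{\Phi^-}_{\rB'E}=\tfrac{1}{\sqrt2}(\ket{00}-\ket{11})$. Because the comparison map $\mathcal{C}\otimes\mathcal{I}_{\rB'}$ acts as the identity on $\rB'E$, it leaves that pair as $\ket{\Phi^+}_{\rB'E}$; since $\ket{\Phi^+}\perp\ket{\Phi^-}$, the two output states are orthogonal already on $\rB'E$, hence globally orthogonal no matter what $\mathcal{C}$ does on $\rA\rA'$. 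Their trace distance is therefore $2$, and as this holds for every $\mathcal{C}$ we conclude $C(\mathcal{C}_X)=2$.

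The main obstacle is precisely this single-copy step. Unlike the SWAP, the transformation $\mathcal{T}(\mathcal{C}_X)$ preserves the computational value of $\rB'$, so the elementary ``measure $\rB'$'' argument used for the SWAP (orthogonal product components) is unavailable. The crux is to realise that entangling $\rB'$ with an ancilla and exploiting phase kickback converts the $\rB'$-controlled structure of $V$ into an orthogonal rotation on $\rB'E$ that no channel acting solely on $\rA\rA'$ can reproduce. The one computation that must be carried out carefully is checking that the chosen Alice input lands in the $(-1)$-eigenspace, so that the kicked-back phase is exactly $-1$ and the pair is rotated to the orthogonal state $\ket{\Phi^-}$; everything else, in particular the reduction to $n=1$, is immediate from the already-established monotonicity and the range bound.
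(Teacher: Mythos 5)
Your proof is correct, and both the reduction to $n=1$ via Lemma~\ref{lem:tensor} together with the range bound~\eqref{eq:ranges}, and the single-copy computation, check out: $V\ket{a,a',b}=\ket{a'\oplus b,\,a\oplus b,\,b}$ is indeed a $\rB'$-controlled (swap / swap followed by $X\otimes X$) gate, $\ket{+}_{\rA}\ket{-}_{\rA'}$ lies in the $(-1)$-eigenspace of $X\otimes X$, and the kickback rotates $\ket{\Phi^+}_{\rB'\rE}$ to $\ket{\Phi^-}_{\rB'\rE}$, giving orthogonal outputs for every $\mathcal{C}$. The one place where you misjudge the situation is the claimed ``main obstacle'': entangling $\rB'$ with an ancilla is not actually needed. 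The paper's proof uses the product input $\ket{x}\ket{-}_{\rA}\ket{+}_{\rA'}\ket{+}_{\rB'}$ and lets the phase kickback act directly on $\rB'$ itself: since $\text{CNOT}(\ket{-}\otimes\ket{+})=\ket{-}\otimes\ket{-}$, the output of $\mathcal{T}(\mathcal{C}_X)$ is the product state $\ket{x}\ket{+}\ket{-}\ket{-}$, whose $\rB'$ component $\ket{-}$ is orthogonal to the $\ket{+}$ that survives under $\mathcal{C}\otimes\mathcal{I}_{\rB'}$. So the ``orthogonal product components on $\rB'$'' argument you declare unavailable does work --- just in the $X$ basis rather than the computational basis ($V$ preserves the computational value of $\rB'$ but not its $X$-basis value). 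Your ancilla-based variant is a correct and slightly more general way of witnessing the same kickback (it would also detect a phase kickback that only dephases $\rB'$ in a basis-independent way), at the cost of a larger ancilla and a marginally longer verification; the paper's choice buys a completely elementary product-state argument.
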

\begin{proof}

We start considering the case $n=1$. We have:
\begin{align*}
&  C(\mathcal{C}_X) 
  = \min_{\mathcal{C}\in \mathfrak{C}(\rA\rA')} \max_{\rho \in
    \st(\rE\rA\rA'\rB')} \\
  &\qquad\qquad\quad\left\| \bigl[(\mathcal{I}_\rE\otimes \mathcal{T}(\mathcal{C}_X)) -  (\mathcal{I}_\rE\otimes \mathcal{C} \otimes \mathcal{I}_{\rB'})\bigr](\rho)\right\|_1 \\
&=\min_{\mathcal{C}\in \mathfrak{C}(\rA\rB)} \max_{\rho \in
                                                                                                                                                                   \st(\rE\rA\rA'\rB')}\\
  &\left\| \scalebox{0.8}{\tikzfig{12}}\ - \ \scalebox{0.8}{\tikzfig{12bis}} \ \right\|_1,
 \end{align*}
 because CNOT is equal to its inverse. We prove the result by showing
 that for any choice of channel $\mathcal{C}$, one can always find a
 state $\rho$ such that
 $\| [(\mathcal{I}_\rE\otimes \mathcal{T}(\mathcal{C}_X)) -
   (\mathcal{I}_\rE\otimes \mathcal{C} \otimes
   \mathcal{I}_{\rB'})](\rho)\|_1=2$. Let
 $\rho = \ket{\psi}\bra{\psi}$ with $\ket{\psi}=\ket{x-++}$, where $x$
 is an arbitrary unit vector of $\mathcal{H}_\rE$, and $\ket{+}$
 and $\ket{-}$ are defined by
 $\ket{\pm}\coloneqq\frac{\ket{0}\pm\ket{1}}{\sqrt{2}}$. We can show
 that on the one hand,
 $(\mathcal{I}_\rE\otimes
 \mathcal{T}(\mathcal{C}_X))(\rho)=\ket{\psi'}\bra{\psi'}$ where
 $\ket{\psi'}=\ket{x+--}$, whereas on the other hand
 $(\mathcal{I}_\rE\otimes \mathcal{C} \otimes
 \mathcal{I}_{\rB'})(\rho)=\ket{x}\bra{x} \otimes
 \mathcal{C}(\ket{-+}\bra{-+})\otimes \ket{+}\bra{+}$.  These two
 states are perfectly distinguishable, by examining the last
 subsystem.

We can now consider the case of arbitrary $n$. By Lemma~\ref{lem:tensor} and
using the fact that $C(\mathcal{C}_X)=2$, one has $C(\mathcal{C}_X^{\otimes n})=2$ for all $n\geq 1$.
\end{proof}

As a corollary of the above lemma one aslo has the asymptotic behaviour $C_{\infty}(\mathcal{C}_X)=2$.

To evaluate signalling, we first exhibit an optimal channel
achieving the infimum (actually a minimum) in its definition: the measurement channel in the computational basis
applied to every output qubit in parallel.

\begin{lemma}[optimal channel for $\Sigma(\mathcal{C}_X^{\otimes n})$
  and simplified formula]\label{lem:optimal-channel-sig-cnot}
  We have
  $\Sigma(\mathcal{C}_X^{\otimes
    n})=\min_{\mathcal{C}\in\mathfrak{C}(B^{\otimes n},B'^{\otimes
      n})} \Sigma_{\mathcal{C}}(\mathcal{C}_X^{\otimes n})$. The
  minimum is attained for the channel $\mathcal{M}^{\otimes n}$, where
  $\mathcal{M} : \rho \mapsto P_0 \rho P_0 + P_1 \rho P_1$ is the
  channel that performs a measurement in the computational
  basis. Moreover, one has the following simplified expression for the
  causal influence
 \begin{equation}\label{eq:sigCn_simpl}
\begin{aligned}
   \Sigma(\mathcal{C}_X^{\otimes n})  &=\left\|
  \mathcal{I}_{\bar{B}}-\mathcal{M}^{\otimes n} \right\|_\diamond\\ & =  \max_{\rho\in \st(E\bar{B})}\left\| \left[\mathcal{I}_{E\bar{B}} - \mathcal{I}_E \otimes \mathcal{M}^{\otimes n}\right](\rho) \right\|_1,
\end{aligned}
\end{equation}
where $\bar{B}=B^{\otimes n}$ and $E\cong \bar{B}$.
\end{lemma}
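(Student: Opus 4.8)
The plan is to mirror the structure of Lemma~\ref{lem:depolarizing_opt}: first put the map ``discard Alice's output'' into closed form, then exhibit an operator identity that makes both the optimal choice $\mathcal M^{\otimes n}$ and the simplified formula immediate, and finally establish optimality by a twirling argument that plays here the role of the Haar average used for the SWAP.

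First I would compute the fixed map $\mathcal N\coloneqq(\Tr_{\bar A'}\otimes\mathcal I_{\bar B'})\,\mathcal C_X^{\otimes n}$ from $\bar A\bar B$ to $\bar B'$ (with $\bar A=A^{\otimes n}$ and similarly for the output systems), so that $\Sigma_{\mathcal C}(\mathcal C_X^{\otimes n})=\left\|\mathcal N-\Tr_{\bar A}\otimes\mathcal C\right\|_\diamond$. Since each CNOT is its own inverse and sends $|\vec a\rangle|\vec b\rangle\mapsto|\vec a\oplus\vec b\rangle|\vec b\rangle$, evaluating on matrix units yields
\[
\mathcal N\bigl(|\vec a\rangle\langle\vec a'|\otimes|\vec b\rangle\langle\vec b'|\bigr)=\delta_{\vec a\oplus\vec a',\,\vec b\oplus\vec b'}\,|\vec b\rangle\langle\vec b'|,
\]
i.e.\ $\mathcal N$ correlates the coherence pattern of Alice's (target) input with the dephasing pattern of Bob's (control) output. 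Being a partial trace of a unitary channel, $\mathcal N$ is itself a channel, so $\left\|\mathcal N\right\|_\diamond=1$ (Lemma~\ref{lem:propdiamondkitaev}).

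The value computation then rests on the operator identity $\mathcal M^{\otimes n}\circ\mathcal N=\Tr_{\bar A}\otimes\mathcal M^{\otimes n}$, which I would check on matrix units (dephasing the output kills every $\vec b\neq\vec b'$ term and forces $\vec a=\vec a'$). This gives $\mathcal N-\Tr_{\bar A}\otimes\mathcal M^{\otimes n}=(\mathcal I_{\bar B'}-\mathcal M^{\otimes n})\circ\mathcal N$, whence submultiplicativity of the diamond norm under composition together with $\left\|\mathcal N\right\|_\diamond=1$ (Appendix~\ref{app:norms}) delivers the upper bound $\Sigma_{\mathcal M^{\otimes n}}(\mathcal C_X^{\otimes n})\le\left\|\mathcal I_{\bar B}-\mathcal M^{\otimes n}\right\|_\diamond$. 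For the matching lower bound I would feed $|+\rangle^{\otimes n}$ into $\bar A$: as it is a $+1$ eigenstate of every bit flip $X^{\vec s}$, one finds $\mathcal N(|+\rangle\langle+|^{\otimes n}\otimes\sigma)=\sigma$, so on this input the difference map acts on $\bar B$ exactly as $\mathcal I_{\bar B}-\mathcal M^{\otimes n}$; maximizing over the remaining $\bar B$ input with reference gives $\Sigma_{\mathcal M^{\otimes n}}(\mathcal C_X^{\otimes n})\ge\left\|\mathcal I_{\bar B}-\mathcal M^{\otimes n}\right\|_\diamond$. Equality follows, and the reference is reduced to $E\cong\bar B$ exactly as in the SWAP case.

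It remains to prove optimality, $\Sigma_{\mathcal C}(\mathcal C_X^{\otimes n})\ge\left\|\mathcal I_{\bar B}-\mathcal M^{\otimes n}\right\|_\diamond$ for every $\mathcal C$, and this is the main obstacle. The same $|+\rangle^{\otimes n}$ probe gives $\Sigma_{\mathcal C}\ge\left\|\mathcal I_{\bar B}-\mathcal C\right\|_\diamond$ for all $\mathcal C$, but this is too weak on its own (it vanishes at $\mathcal C=\mathcal I$), so I would first symmetrize. From the CNOT propagation rules I would establish the covariances $\mathcal N(\mathcal I_{\bar A}\otimes\mathcal P_{\bar B})=\mathcal P_{\bar B'}\,\mathcal N$ for all Pauli $\mathcal P$ and the phase-kickback relation $\mathcal N(\mathcal Z_{\bar A}\otimes\mathcal I_{\bar B})=\mathcal Z_{\bar B'}\,\mathcal N$. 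Pre- and post-composing $\mathcal N-\Tr_{\bar A}\otimes\mathcal C$ with these unitary channels leaves $\Sigma_{\mathcal C}$ invariant while replacing $\mathcal C$ by $\mathcal Z^{\vec z}\mathcal C$, $\mathcal C\mathcal Z^{\vec z}$ and $\mathcal X^{\vec w}\mathcal C\mathcal X^{\vec w}$; averaging over them, using convexity of $\mathcal C\mapsto\Sigma_{\mathcal C}$ (the triangle-inequality argument of Lemma~\ref{lem:infsup}), I may assume without loss of generality that $\mathcal C$ is dephased on both sides and bit-flip covariant, i.e.\ a classical convolution channel $\mathcal C=\sum_{\vec e}q(\vec e)\,\mathcal X^{\vec e}\circ\mathcal M^{\otimes n}$. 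For such $\mathcal C$ the bound $\Sigma_{\mathcal C}\ge\left\|\mathcal I_{\bar B}-\mathcal C\right\|_\diamond$ becomes sharp enough: evaluating the right-hand side on the maximally entangled input gives $2\bigl(1-q(\vec 0)/2^n\bigr)\ge 2\bigl(1-1/2^n\bigr)=\left\|\mathcal I_{\bar B}-\mathcal M^{\otimes n}\right\|_\diamond$, with equality iff $q=\delta_{\vec 0}$, that is $\mathcal C=\mathcal M^{\otimes n}$. I expect the verification of the covariances and the twirl bookkeeping to be the delicate part, whereas the value itself falls out cleanly from the identity $\mathcal N-\Tr_{\bar A}\otimes\mathcal M^{\otimes n}=(\mathcal I_{\bar B'}-\mathcal M^{\otimes n})\circ\mathcal N$.
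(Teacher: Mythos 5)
Your proposal is correct, and it reaches the result by a route that overlaps with the paper's only in its first half. Both arguments hinge on the same $Z$-covariances of the CNOT (phase kickback $\mathcal N\circ(\mathcal Z_{\bar A}\otimes\mathcal I_{\bar B})=\mathcal Z_{\bar B'}\circ\mathcal N$ and control-side covariance) together with convexity of $\mathcal C\mapsto\Sigma_{\mathcal C}$, to argue that one may dephase the competitor channel on both sides without increasing $\Sigma_{\mathcal C}$; and both use the key identity $\mathcal M^{\otimes n}\circ\mathcal N=\Tr_{\bar A}\otimes\mathcal M^{\otimes n}$ (in the paper this appears as $(\Tr_{\rA}\otimes\mathcal M)\,\mathcal C_X=\Tr_{\rA}\otimes\mathcal M$) to obtain the simplified expression. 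Where you diverge is the final optimality step. The paper stops the twirl at $\mathcal M^{\otimes n}\circ\mathcal C\circ\mathcal M^{\otimes n}$ and then reduces the problem to minimizing $\max_\rho\|\rho_1+N\|_1$ over traceless block-diagonal perturbations $N$, which is settled by the permutation-symmetrization argument of Lemma~\ref{lem:optimalmatrix} in the appendix. You instead add a bit-flip conjugation twirl (using $\mathcal X_{\bar B'}\circ\mathcal N\circ(\mathcal I_{\bar A}\otimes\mathcal X_{\bar B})=\mathcal N$) to land in the family of Pauli convolution channels $\sum_{\vec e}q(\vec e)\,\mathcal X^{\vec e}\circ\mathcal M^{\otimes n}$, then use the probe $\ket{+}^{\otimes n}$ on $\bar A$ — for which $\mathcal N$ acts as the identity on $\bar B$ — to get $\Sigma_{\mathcal C}\geq\|\mathcal I_{\bar B}-\mathcal C\|_\diamond$, and finally read off $\|\mathcal I_{\bar B}-\mathcal C\|_\diamond=2(1-q(\vec 0)/2^n)\geq 2(1-2^{-n})$ from Lemma~\ref{lem:distance-random-unitaries}. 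This bypasses the appendix lemma entirely, and as a bonus your computation already delivers the numerical value $2(2^n-1)/2^n$ together with the optimal (maximally entangled) input state, i.e.\ it absorbs the content of the subsequent theorem; the price is the extra covariance bookkeeping, which you have correctly identified as the delicate part and which does check out on matrix units.
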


\begin{proof}

  By Lemma~\ref{lem:infsup} we know that there exists a channel
  $\mathcal{C}\in\mathfrak{C}(B^{\otimes n},B'^{\otimes n})$ such that
  $\Sigma(\mathcal{C}_X^{\otimes n})=
  \Sigma_{\mathcal{C}}(\mathcal{C}_X^{\otimes n})$. Then we prove the following chain
  of implications:
\begin{equation}\label{eq:chain-implications}
  \begin{aligned}
\mathcal{C} \:\text{optimal}\:&\Rightarrow \: \mathcal{M}^{\otimes n} \circ
    \mathcal{C}\circ \mathcal{M}^{\otimes n}\:\text{
    optimal}\\
    &\Rightarrow \:\mathcal{M}^{\otimes n}\:\text{optimal}.
  \end{aligned}
  \end{equation}
  We start by proving the first implication in
  Eq.~\eqref{eq:chain-implications}.  Consider the unitary channel
  $\mathcal{Z}=Z (\cdot)Z$, where $Z=P_0-P_1$, and take the convenient
  notation $\bar{X}\coloneqq X^{\otimes n}$ and
  $\mathcal{Z}_X\coloneqq\mathcal{Z}\otimes \mathcal{I}_{X^{\otimes
      n-1}}$ for all system $X$. Starting from the definition of $\mathcal{C}$ one has:
\begin{align*}
  &\Sigma(\mathcal{C}_X^{\otimes n}) 
  = \max_{\rho \in \st(\rE\bar{\rA}\bar{\rB})} \|
    [(\mathcal{I}_\rE \otimes( \Tr_{\bar{\rA}'}\otimes
    \mathcal{I}_{\bar{\rB'}}) \mathcal{C}_X^{\otimes n} ) \\
  & \qquad \qquad \qquad \qquad \quad- (\mathcal{I}_E \otimes \Tr_{\bar{\rA}}\otimes \mathcal{C})](\rho) \|_1\\
  &= \max_{\rho \in \st(\rE\bar{\rA}\bar{\rB})} \|
    [(\mathcal{I}_\rE \otimes ( (\Tr_{\bar{\rA}'}
    \circ \mathcal{Z}_{\rA'})\otimes
    \mathcal{I}_{\bar{\rB'}})\ \mathcal{C}_X^{\otimes n} )
  \\
  & \qquad \qquad \qquad \qquad \quad -(\mathcal{I}_\rE \otimes \Tr_{\bar{\rA}}\otimes \mathcal{C} )](\rho) \|_1 \\
  &= \max_{\rho \in \st(\rE\bar{\rA}\bar{\rB})} \|
    [(\mathcal{I}_\rE \otimes( (\Tr_{\bar{\rA}'} \circ
    \mathcal{Z}_{\rA'})\otimes \mathcal{I}_{\bar{\rB'}})\
    \mathcal{C}_X^{\otimes n} (\mathcal{Z}_\rA\otimes
    \mathcal{I}_{\bar{\rB}}) ) \\
  &\qquad \qquad \qquad \qquad \quad - (\mathcal{I}_\rE \otimes (\Tr_{\bar{\rA}}\circ \mathcal{Z}_\rA)\otimes \mathcal{C} )](\rho) \|_1 \\
  &=\max_{\rho \in \st(E\bar{\rA}\bar{\rB})} \|
    [(\mathcal{I}_\rE \otimes( \Tr_{\bar{\rA}'} \otimes
    \mathcal{Z}_{\rB'}) \mathcal{C}_X^{\otimes n} \\
  & \qquad \qquad \qquad \qquad \quad - (\mathcal{I}_\rE \otimes \Tr_{\bar{\rA}}\otimes \mathcal{C} )](\rho) \|_1 \\
  &=\max_{\rho \in \st(\rE\bar{\rA}\bar{\rB})} \|
    [(\mathcal{I}_\rE \otimes( \Tr_{\bar{\rA}'} \otimes
    \mathcal{I}_{\bar{\rB}}) \mathcal{C}_X^{\otimes n} \\
  &\qquad \qquad \qquad \qquad \quad - (\mathcal{I}_\rE \otimes \Tr_{\bar{\rA}}\otimes (\mathcal{Z}_{\rB'} \circ \mathcal{C}) )](\rho) \|_1,
\end{align*}
where in the second line, we use the fact that $\mathcal{Z}_\rA$ is
trace-preserving, in the third line we replace $\rho$ with
$\left[\mathcal{I}_\rE\otimes \mathcal{Z}_\rA \otimes
  \mathcal{I}_{\bar{\rB}}\right] (\rho)$, which also ranges over
$\st(\rE\bar{\rA}\bar{\rB})$, and in the fourth line we used the fact
that
$(\mathcal{Z}\otimes\mathcal{I})\mathcal{C}_X
(\mathcal{Z}\otimes\mathcal{I})=
(\mathcal{I}\otimes\mathcal{Z})\mathcal{C}_X$ on the left term and the
fact that $\mathcal{Z}_\rA$ is trace-preserving on the right term. In
the last line, we use the fact that the trace norm is stable under
unitary channels, and that $\mathcal{Z}$ is its own inverse.
Consequently $\mathcal{Z}_{\rB'}\circ \mathcal{C}$ is also an optimal
channel. The set of optimal channels is convex, therefore
$\frac{1}{2}\mathcal{C}+\frac{1}{2} \mathcal{Z}_B\circ \mathcal{C} =
\left(\frac{1}{2}\mathcal{I_{\bar{B'}}} +\frac{1}{2} \mathcal{Z}_{B'}
\right) \circ \mathcal{C}$ is also optimal, and noticing that
$\frac{1}{2}\mathcal{I} +\frac{1}{2} \mathcal{Z} =\mathcal{M}$, one
has that
$ (\mathcal{M}\otimes \mathcal{I}_{B'^{\otimes n-1}})\circ
\mathcal{C}$ is optimal. We can now repeat the same reasoning,
beginning with the optimal channel
$(\mathcal{M}\otimes \mathcal{I}_{B'^{\otimes n-1}})\circ \mathcal{C}$
and using the fact that
$(\mathcal{Z}\otimes\mathcal{I})\mathcal{C}_X
(\mathcal{Z}\otimes\mathcal{I})=
\mathcal{C}_X(\mathcal{I}\otimes\mathcal{Z})$ one proves
that
$(\mathcal{M}\otimes \mathcal{I}_{B'^{\otimes n-1}})\circ
\mathcal{C}\circ (\mathcal{M}\otimes \mathcal{I}_{B'^{\otimes n-1}})$
is an optimal channel. By repeating this process on each input/output
pair, one finally gets that
$\mathcal{M}^{\otimes n} \circ \mathcal{C}\circ \mathcal{M}^{\otimes n}$ is an optimal
channel.

We now prove the second implication in
Eq.~\eqref{eq:chain-implications}. Since $\mathcal{M}^{\otimes n}
\circ \mathcal{C}\circ \mathcal{M}^{\otimes n}$ is optimal, it follows that:
\begin{align*}
&  \Sigma(\mathcal{C}_X^{\otimes n}) \\
  &\quad=  \left\| ( \Tr_{\bar{\rA}'}\otimes \mathcal{I}_{\bar{\rB'}})\
    \mathcal{C}_X^{\otimes n}  - \Tr_{\bar{A}}\otimes
    (\mathcal{M}^{\otimes n } \circ \mathcal{C} \circ
    \mathcal{M}^{\otimes n }) \right\|_{\diamond}\\
  &\quad=  \left\| \Tr_{\bar{\rA}}\otimes \mathcal{I}_{\bar{\rB}}  -
    \bigl(\Tr_{\bar{\rA}'}\otimes (\mathcal{M}^{\otimes n } \circ
    \mathcal{C} \circ \mathcal{M}^{\otimes n })\bigr)\
    \mathcal{C}_X^{\otimes n} \right\|_{\diamond}
  \\
  &\quad=  \left\|  \Tr_{\bar{\rA}}\otimes \mathcal{I}_{\bar{\rB}} -
    \Tr_{\bar{\rA}}\otimes (\mathcal{M}^{\otimes n } \circ \mathcal{C}
    \circ \mathcal{M}^{\otimes n }) \right\|_{\diamond}\\
  &\quad=  \left\| \mathcal{I}_{\bar{\rB}} - \mathcal{M}^{\otimes n } \circ \mathcal{C} \circ \mathcal{M}^{\otimes n } \right\|_{\diamond}\\
  &\quad=  \max_{\rho\in \st(\bar{\rB}\rE)}\left\|\left[ \mathcal{I}_{\bar{\rB}\rE} - \mathcal{M}^{\otimes n } \circ \mathcal{C} \circ \mathcal{M}^{\otimes n }\otimes \mathcal{I}_\rE\right] (\rho)\right\|_1,
 \end{align*}
 where in the second equality, we use the fact that CNOT is its own
 inverse, while in the third equality we used the fact that
 $(\Tr_{\rA}\otimes \mathcal{M})\mathcal{C}_X = \Tr_{\rA}\otimes
 \mathcal{M}$. In the last line, we have $\rE\cong \rB$, and we order
 the systems as $\bar{\rB}\rE$ rather that $\rE\bar{\rB}$ in order to
 simplify the notations in the next part of the proof. We define the subspaces $E_0$ and $E_1$ as
 the image and kernel of the projector
 $ \mathcal{M}^{\otimes n}\otimes \mathcal{I}_\rE$, respectively. Every
 state $\rho\in \st(\bar{\rB}\rE)$ can be decomposed as
 $\rho=\rho_0+\rho_1$ with $\rho_0\in E_0, \rho_1\in E_1$. Therefore:
\begin{align*}
& \Sigma(\mathcal{C}_X^{\otimes n}) \\
 &=\max_{\rho \in \st(\bar{\rB}\rE)} \left\| \rho_0 + \rho_1 -
   \left[\mathcal{M}^{\otimes n } \circ \mathcal{C} \circ
   \mathcal{M}^{\otimes n }\otimes \mathcal{I}_\rE\right](\rho)
   \right\|_1
  \\ & \geq \min_{\substack{N \in E_0: \\ \Tr(N)=0}} \ \max_{\rho \in \st(\bar{\rB}\rE)} \left\| \rho_1 + N \right\|_1,
 \end{align*}
since $\rho_0 - \left[\mathcal{M}^{\otimes n } \circ \mathcal{C} \circ
  \mathcal{M}^{\otimes n} \otimes \mathcal{I}_\rE\right](\rho)$ is in
$E_0$ and has trace 0. By Lemma~\ref{lem:optimalmatrix}, the minimum
is attained when $N=0$ and $ \Sigma(\mathcal{C}_X^{\otimes n}) \geq
\max_{\rho \in \st(\bar{\rB}\rE)} \left\| \rho_1 \right\|_1 =
\max_{\rho \in \st(\bar{\rB}\rE)} \left\| \rho - \left[
    \mathcal{M}^{\otimes n} \otimes
    \mathcal{I}_\rE\right](\rho)\right\|_1$, which shows that
$\mathcal{M}^{\otimes n}$ is optimal, thus
concluding the proof.
\end{proof}

We can now also calculate the signalling for $\mathcal{C}_X^{\otimes n}$.
  \begin{theorem}[Signalling of $\mathcal{C}_X^{\otimes n}$] Let $\tC$ denote the \emph{CNOT} of qubits, then one has
 \begin{align}
 \label{eq:signalling_cnot}
 \begin{split}
   \Sigma(\mathcal{C}_X^{\otimes n}) =2 (d_{\bar{B'}} - 1)/d_{\bar{B'}}
 \end{split}
 \end{align}
with $d_{\bar{B'}}=2^n$. Moreover, the maximum in
Eq.~\eqref{eq:sigCn_simpl} can be achieved via any maximally entangled state.
\end{theorem}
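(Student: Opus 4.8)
The plan is to reduce everything to the simplified formula of Lemma~\ref{lem:optimal-channel-sig-cnot}, which already exhibits the optimal channel and gives
\[
\Sigma(\mathcal{C}_X^{\otimes n}) = \left\|\mathcal{I}_{\bar B} - \mathcal{M}^{\otimes n}\right\|_\diamond,
\]
where $\bar B = B^{\otimes n}$ has dimension $d_{\bar B'} = 2^n$ and $\mathcal{M}^{\otimes n}$ is the parallel computational-basis measurement. Thus the task is exactly the same kind of computation carried out for the SWAP in Theorem~\ref{thm:signalling-swap}: evaluate the diamond-norm distance between the identity and a fixed channel that can be realised as a uniform mixture of orthogonal unitaries containing the identity.

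First I would exhibit this random-unitary form of $\mathcal{M}^{\otimes n}$. Since $\mathcal{M} = \tfrac12(\mathcal{I} + \mathcal{Z})$ with $\mathcal{Z} = Z(\cdot)Z$ (as already noted in the proof of Lemma~\ref{lem:optimal-channel-sig-cnot}), expanding the tensor power gives
\[
\mathcal{M}^{\otimes n}(\cdot) = \frac{1}{2^n}\sum_{s\in\{0,1\}^n} Z_s(\cdot)Z_s, \qquad Z_s \coloneqq Z^{s_1}\otimes\cdots\otimes Z^{s_n}.
\]
These $2^n$ diagonal Pauli strings are unitary and Hermitian, include the identity at $s=0$, satisfy $Z_s Z_{s'} = Z_{s\oplus s'}$, and are mutually Hilbert--Schmidt orthogonal because $\Tr[Z_s Z_{s'}] = \Tr[Z_{s\oplus s'}] = 2^n\,\delta_{s,s'}$. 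Hence $\mathcal{M}^{\otimes n}$ is the complete dephasing channel on $\bar B$, written as a uniform mixture of $N = 2^n = d_{\bar B'}$ orthogonal unitaries containing the identity. Applying Lemma~\ref{lem:distance-random-unitaries} exactly as in Theorem~\ref{thm:signalling-swap} then yields
\[
\Sigma(\mathcal{C}_X^{\otimes n}) = \frac{2(N-1)}{N} = \frac{2(d_{\bar B'} - 1)}{d_{\bar B'}}.
\]

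For the optimality of maximally entangled states I would argue directly from Eq.~\eqref{eq:sigCn_simpl}. For any maximally entangled state $\ket{\Psi}\in\st(E\bar B)$ with $E\cong\bar B$ one has $(\mathcal{I}_E\otimes\mathcal{M}^{\otimes n})(\ket{\Psi}\bra{\Psi}) = \tfrac{1}{d}\sum_s \ket{\Psi_s}\bra{\Psi_s}$ with $\ket{\Psi_s}\coloneqq(\mathcal{I}_E\otimes Z_s)\ket{\Psi}$ and $d = d_{\bar B'}$; the vectors $\ket{\Psi_s}$ are orthonormal, since $\langle\Psi_s|\Psi_{s'}\rangle = \langle\Psi|(\mathcal{I}_E\otimes Z_{s\oplus s'})|\Psi\rangle = \tfrac{1}{d}\Tr[Z_{s\oplus s'}] = \delta_{s,s'}$, where the second equality uses that the marginal of $\ket{\Psi}$ on $\bar B$ is maximally mixed. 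Because $\ket{\Psi} = \ket{\Psi_0}$, the operator $[\mathcal{I}_{E\bar B} - \mathcal{I}_E\otimes\mathcal{M}^{\otimes n}](\ket{\Psi}\bra{\Psi})$ is diagonal in the orthonormal family $\{\ket{\Psi_s}\}$ with eigenvalues $1 - 1/d$ (once) and $-1/d$ ($d-1$ times), so its trace norm equals $2(d-1)/d$, matching $\Sigma(\mathcal{C}_X^{\otimes n})$. I expect the genuine obstacle to be entirely absorbed into Lemma~\ref{lem:distance-random-unitaries} (the upper bound certifying that no state beats the maximally entangled value); granting that lemma, the only care needed here is the structural bookkeeping: recognising $\mathcal{M}^{\otimes n}$ as the complete dephasing channel realised by a mixture of precisely $2^n$ orthogonal unitaries---an incomplete family, unlike the complete operator basis appearing for the depolarizing channel in the SWAP case (where a full basis has $d^2$ elements)---and correctly substituting $N = d_{\bar B'} = 2^n$.
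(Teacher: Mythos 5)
Your proposal is correct and follows essentially the same route as the paper: it reduces to the simplified formula of Lemma~\ref{lem:optimal-channel-sig-cnot}, writes $\mathcal{M}^{\otimes n}$ as a uniform mixture of the $2^n$ orthogonal diagonal Pauli strings (including the identity), and invokes Lemma~\ref{lem:distance-random-unitaries} exactly as in the SWAP case. Your explicit check of the Hilbert--Schmidt orthogonality of the $Z_s$ and the direct trace-norm computation on a maximally entangled state are welcome additions, but they do not change the argument.
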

\begin{proof}
  The proof is analogous to that of Theorem~\ref{thm:signalling-swap}
  upon noticing that Eq.~\eqref{eq:sigCn_simpl} corresponds to the
  diamond norm between the channel $\mathcal{M}^{\otimes n}$ and the
  identity channel, and that the former abides by the hypotheses of
  Lemma~\ref{lem:distance-random-unitaries}. Indeed the qubit channel
  $\mathcal{M}$ can be written as a random unitary as
  $\mathcal{M}=\frac{1}{2} \mathcal{I}+ \frac{1}{2}\mathcal{Z}$ where
  $\mathcal{Z}(\cdot)=Z(\cdot)Z$ is the Pauli channel having Kraus
  operator $Z=P_0-P_1$. Therefore, $\mathcal{M}^{\otimes n}$ can be
  realized as random unitaries including the identity as follows
  \begin{align}
    \mathcal{M}^{\otimes
    n}(\cdot)=\frac{1}{2^n}\sum_{j=1}^{2^n} \mathcal{U}_j (\cdot),
\end{align}
with $\{\mathcal{U}_j\}_{j=1}^{2^n}=\{\otimes_{k=1}^{n}
\mathcal{V}_{k},
\mathcal{V}_k\in\{\mathcal{I},\mathcal{Z}\}\}$. We can finally apply
Lemma~\ref{lem:distance-random-unitaries} to conclude the proof.
\end{proof}

The above result is similar to the one obtained for the SWAP gate in
Theorem~\ref{thm:signalling-swap}, namely
$\Sigma(\mathcal{\mathcal{S}}) = 2\frac{d_A^2 - 1}{d_A^2}$ (for the
SWAP gate, Bob's output is system $A$). The fact that the dimension is
not squared here can be explained by the fact that the auxiliary
system $E$ is not actually needed (one can use a factorized state instead of a maximally entangled state to compute the diamond norm).

The asymptotic behaviour of the signalling of CNOT immediately follows from
the last lemma: we have
$2 \frac{d_{\bar{B'}} - 1}{d_{\bar{B'}}} \xrightarrow[n \rightarrow
\infty]{} 2$, therefore
$\Sigma_{\infty}(\mathcal{C}_X^{\otimes n})=2$.

\begin{remark}[Causal influence and signalling are not monotone for
  sequential composition]\label{rem:monotonicity-sequential}
\rm{As we have already observed, contrary to the tensor product operation, $\Sigma(\cU)$ and $C(\cU)$ do not exhibit any monotonicity property with respect to sequential composition. We can provide examples showing that, in general, one can have $\Sigma(\cU\circ\cV)\geq\max\{\Sigma(\cU),\Sigma(\cV)\}$ or $\Sigma(\cU\circ\cV)<\min\{\Sigma(\cU),\Sigma(\cV)\}$ and similarly for the causal influence. Consider two unitaries $\tU,\;\cV\in\mathfrak{U}(\rA\rB)$ and the composition $(\cU\otimes\tI)\circ(\tI\otimes\cV)$ where $\tI$ is the identity on $\rA\rB$. Then, whatever $\tU$ and $\tV$ are:
\begin{align*}
  \Sigma[(\tU\otimes\tI)\circ(\tI\otimes\tV)]&=\Sigma(\tU\otimes\tV)\geq\max\{\Sigma(\tU),\Sigma(\tV)\}\\
                                               &=\max\{\Sigma(\tU\otimes\tI),\Sigma(\tI\otimes\tV)\}.
\end{align*}

As an explicit example take $\tU=\tV={\rm CNOT}$. On the one hand $\max\{\Sigma(\tU\otimes\tI),\Sigma(\tI\otimes\tV)\}=1$, while the tensor product has signaling bounded from below by $2\frac{2^2-1}{2^2}=\frac{3}{2}$, therefore the inequality is strict. In the case of causal influence, the explicit examples at our disposal saturate the inequality, since $C(\tS)=C(\tC_X)=2$. Now consider $\tV=\tU^{-1}$ with $\tU=\tS=\tU^{-1}$. Then

\[
\Sigma(\tU\tU^{-1})=0<\frac{3}{2}=\min\{\Sigma(\tU),\Sigma(\tU^{-1}) \}.
\]

Therefore the inequality is strict, as in the case of the causal influence since $C(\tS)=2$.}
\end{remark}
\section{Outlook and conclusions}
\label{sec:conclusions}

In this work we have pursued the study of two functions, $\Sigma$ and $C$, formerly introduced in~\cite{barsse2024causalinfluenceversussignalling}, which respectively quantify the amount of signalling and causal influence mediated
by a unitary channel. It is worth noticing that the introduced measures,
which establish a hierarchy on the set of unitary channel in terms of
signalling and causal-influencing power, are generally incommensurable
quantities. Therefore, one has to be careful in drawing conclusions
based on the numerical results: if a given channel $\cU$ saturates the
inequality $\Sigma(\cU)\leq C(\cU)$, we cannot conclude that all the
causal influence mediated by $\cU$ has to be ascribed to its
signalling power or, complementarily, if we find $\tU$ such that
$\Sigma(\cU) < C(\cU)$ we cannot claim that its causal influence 
cannot be completely described in terms of signalling.

We have shown that the above two measures are continuous as functions on the set of
unitary channels, in the sense that if two channels are close in
diamond norm, their signalling and causal influence are correspondingly
close. We have also investigated the behaviour of the signalling and
causal influence with respect to parallel and sequential
composition. While it is intuitive that Alice signals and induces more
causal influence on Bob when they share multiple copies of the same
channel, there is no reason to expect such a ``super-activation''
property with respect to sequential composition. Indeed, in the former
case we have proved that for every pair of unitaries $\tU$ and $\tV$
we have $\Sigma(\tU\otimes\tV)\geq\max\{\Sigma(\tU),\Sigma(\tV)\}$,
while for sequential composition we have given counterexamples showing
that one can have
$\Sigma(\tU\circ\tV)\geq\max\{\Sigma(\tU),\Sigma(\tV)\}$ and
$\Sigma(\tU\circ\tV)<\max\{\Sigma(\tU),\Sigma(\tV)\}$ and similarly
for $C$.

\begin{table}[h]
\begin{center}
\begin{tabular}{|c|c|c|c|c|}
\hline\hline
	 & CNOT & SWAP & SWAP $(d)$ &CNOT$^{\otimes n}$ \\[0.3cm]
\hline
$\Sigma(\cU)$ & 1 & $\frac{3}{2}$ & $2\frac{d^2-1}{d^2}$ & $2\frac{2^n-1}{2^n}$\\[0.3cm]
\hline
$C(\cU)$ & 2 & 2 &2 &2 \\[0.3cm]
\hline
\end{tabular}
\end{center}
\caption{}
\label{tab:values}
\end{table}

The general features (continuity and monotonicity) of signalling and
causal influence here derived, have enhanced their analytical
evaluation on two references: the SWAP and the CNOT. These two examples
prove a remnant of the inequivalence between the two causal
relations in quantum theory and allows to conjecture asymptotic
equivalence.

\begin{description}
\item[Gap between signalling and causal influence] (From the first two columns of the table~\ref{tab:values})

  If we take a look at the values of $\Sigma$ and $C$ for the SWAP
  when $\rA$ and $\rB$ are qubits and we compare them with the
  corresponding results of the CNOT we can observe that: i) the
  signalling and causal influence of the SWAP are both maximal, 3/2
  and 2 respectively. This means that all the causal influence may be
  mediated by the channel could be ascribed to the signalling in this
  case, despite the numerical discrepancy ii) The signalling of the
  CNOT, being equal to 1, is not the maximal value that is allowed
  while its causal influence is 2 which is maximal. This proves that
  for the CNOT part of its causal influence is not due to signalling
  but is an extra causal effect. In other words CNOT has the largest
  causal effects possible, while it is less signalling than SWAP. The
  causal effects in communication are then beyond it. The mismatch
  between signalling and causal influence for quantum CNOT
  is reminiscent of the one highlighted by the classical
  CNOT~\cite{Perinotti2021causalinfluencein}.

\item[Evidence of asymptotic equivalence] (From the third and fourth columns of the table~\ref{tab:values})

  Despite the proof that a gap occurs for specific channels $\tU$, e.~g.~the CNOT gate, the prototypical cases of study indicate that the
  gap vanishes in the limit of infinite copies used in parallel, which
  leads to the following conjecture:

  \emph{Conjecture:}
  $ \Sigma_{\infty}(\mathcal{U})\coloneqq \lim_{n\rightarrow
    +\infty}\Sigma(\mathcal{U}^{\otimes n}) =
  C_{\infty}(\mathcal{U})\coloneqq \lim_{n\rightarrow
    +\infty}C(\mathcal{U}^{\otimes n})$ for every $\tU$.

  Indeed, in the case of the SWAP of general dimension and for tensor
  powers of the CNOT, the causal influence has been analytically
  found and is equal to 2 (see second line of the table). 

  We computed the signalling of asymptotic channels, with its value
  going to 2 in the limit of large $n$ (see first line of the
  table). We have identified the optimal channels in the infimum
  (actually a minimum) of Definitions~\ref{def:nos} (see also
  Eq.~\eqref{eq:spelled-defs}): the depolarising channel
  $\tD:\rho\mapsto\frac{I}{d_{\rA}}$ for the SWAP and the tensor power
  of the channel corresponding to a measurement on a single qubit in
  the computational basis $\tM:\rho\mapsto P_0\rho P_0 + p_1\rho P_1$
  for the CNOT$^{\otimes n}$. Then we found the optimal states in the
  supremum (actually a maximum) in Eqs.~\eqref{eq:spelled-defs-2} (see
  also the expressions \eqref{eq:sigSWAP_simpl} and
  \eqref{eq:sigCn_simpl}) for $\Sigma(\tS)$ and
  $\Sigma(\tC_X^{\otimes n})$ respectively. Both for the SWAP in
  arbitrary dimension $d_\rA$ and for tensor powers of the CNOT
  $C_X^{\otimes{n}}$, the maximally entangled state maximises
  Eqs.~\eqref{eq:sigSWAP_simpl} and \eqref{eq:sigCn_simpl}.
   \end{description} 

   The choice of the diamond norm in the definition of $\Sigma$ and
   $C$ is due to its physical connection to the task of channel
   discrimination. While being the most suitable norm if one aims to
   generalize these measures to more general contexts, such as the
   generalised probabilistic theories framework, one could consider
   other norms within quantum theory.  Given that the diamond norm
   can be hard to compute, the search of alternatives (e.g. the
   Hilbert-Schmidt norm, which is easier to evaluate) is of relevance
   in view of application to practical scenarios, e.g. quantum
   cryptography.  However, even if in finite dimensional quantum
   theory all norms are equivalent, one cannot go back and forth from
   one norm to the other, different norms may establish different
   hierarchies, and a careful investigation is needed, both with
   analytical and numerical tools. In this direction, one could
   compare our definitions with quantifiers introduced
   in~\cite{PhysRevLett.129.020501,PhysRevA.106.062415,PhysRevA.108.022222}

\emph{Acknowledgements.} {K. B. acknowledges financial support from the ARPE
  program of ENS Paris-Saclay.  P. P. acknowledges financial support from PNRR MUR
  project PE0000023-NQSTI. A. T. acknowledges the financial support of Elvia
  and Federico Faggin Foundation (Silicon Valley Community Foundation
  Project ID\#2020-214365). L. V. acknowledges financial support from
  the French government under the France 2030 investment plan, as part
of the initiative d'Excellence d'Aix-Marseille Univestit\'e-A*MIDEX, AMX-22-CEI-01.}

\bibliographystyle{unsrt} 
\bibliography{biblio}

\newpage
\begin{appendices}

 \section{Norms}
 \label{app:norms}

 We define several useful norms: first the \emph{trace norm}, which is defined over matrices, and then the \emph{induced trace norm} and \emph{diamond norm}, which are defined over linear superoperators (linear maps that act on matrices). We will also explain why, when it comes to evaluating how close two quantum channels are, the diamond norm is the most relevant notion (see Refs~\cite{10.1145/276698.276708,Watrous_2018} for more details).

 \textbf{The trace norm.} The \emph{trace norm} $\|\cdot\|_1$ is defined over matrices as follows:
 \[
  \|X\|_{1}\coloneqq\Tr\sqrt{X^{\dag}X}
 \]

 If $X$ is Hermitian, $\|X\|_1$ is the sum of the absolute values of its eigenvalues:
 \[
  \|X\|_1=\sum_{\lambda\in Sp(X)} |\lambda|
 \]

The trace norm is closely related to the \emph{trace distance}, which is a standard way of evaluating the closeness of two quantum states. The trace distance between two quantum states $\sigma$ and $\rho$ is given by
\[
 D(\sigma,\rho)\coloneqq\frac{1}{2}\|\sigma-\rho\|_1.
\]
We always have $0\leq D(\sigma,\rho)\leq 1$, with $D(\sigma,\rho)=0$ if and only if $\sigma = \rho$ and $D(\sigma,\rho)=1$ if and only if $\sigma$ and $\rho$ have orthogonal support, i.e. they can be perfectly distinguished by some quantum measurement. In the case of pure states, the trace distance may be reformulated as follows:
\[
 D(\ket{\psi}\bra{\psi},\ket{\varphi}\bra{\varphi})=\sqrt{1-|\bra{\psi}\varphi\rangle|^2}.
\]
Notice that the distance is 1 if and only if the states are orthogonal and 0 if and only if $\ket{\psi}=e^{i\theta}\ket{\varphi}$ for some $\theta$.

\begin{lemma}[Properties of the trace norm]
\label{lem:proptracenorm}
The following properties hold.
\begin{itemize}
 \item For all quantum state $\rho$, $\|\rho\|_1=1$,
 \item $\|X\otimes Y\|_1 = \|X\|_1\|Y\|_1$.
\end{itemize}

\end{lemma}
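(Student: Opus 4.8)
The plan is to derive both properties directly from the definition $\|X\|_1 = \Tr\sqrt{X^{\dag}X}$, equivalently from the fact that $\|X\|_1$ is the sum of the singular values of $X$. The first item will be an immediate consequence of the positivity of density matrices, while the second reduces to the multiplicativity of singular values under the tensor product.

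For the first item, I would note that a quantum state $\rho$ is by definition positive semi-definite with $\Tr\rho = 1$. Positivity gives $\rho^{\dag} = \rho$ and hence $\rho^{\dag}\rho = \rho^2$; since $\rho$ has non-negative eigenvalues, its unique positive semi-definite square root is $\rho$ itself, so $\sqrt{\rho^{\dag}\rho} = \rho$ and therefore $\|\rho\|_1 = \Tr\rho = 1$. Equivalently, one may invoke the eigenvalue formula $\|X\|_1 = \sum_{\lambda\in Sp(X)}|\lambda|$ valid for Hermitian $X$ and use $|\lambda| = \lambda$ for each non-negative eigenvalue of $\rho$.

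For the second item, the first step is the algebraic identity
\[
 (X\otimes Y)^{\dag}(X\otimes Y) = (X^{\dag}\otimes Y^{\dag})(X\otimes Y) = X^{\dag}X\otimes Y^{\dag}Y,
\]
obtained from $(A\otimes B)^{\dag} = A^{\dag}\otimes B^{\dag}$ together with the mixed-product rule $(A\otimes B)(C\otimes D) = AC\otimes BD$. The crux of the argument is then the factorisation of the matrix square root: for positive semi-definite $P$ and $Q$ one has $\sqrt{P\otimes Q} = \sqrt{P}\otimes\sqrt{Q}$. I would justify this by observing that $\sqrt{P}\otimes\sqrt{Q}$ is positive semi-definite (a tensor product of positive semi-definite matrices is positive semi-definite) and that $\bigl(\sqrt{P}\otimes\sqrt{Q}\bigr)^2 = P\otimes Q$, so by uniqueness of the positive semi-definite square root the two coincide. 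Applying this with $P = X^{\dag}X$ and $Q = Y^{\dag}Y$ and then taking the trace, the multiplicativity of the trace under the tensor product, $\Tr(M\otimes N) = \Tr(M)\,\Tr(N)$, yields
\[
 \|X\otimes Y\|_1 = \Tr\bigl(\sqrt{X^{\dag}X}\otimes\sqrt{Y^{\dag}Y}\bigr) = \Tr\sqrt{X^{\dag}X}\;\Tr\sqrt{Y^{\dag}Y} = \|X\|_1\|Y\|_1.
\]

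The only genuinely non-trivial point is the square-root factorisation $\sqrt{P\otimes Q} = \sqrt{P}\otimes\sqrt{Q}$; everything else is a short computation. I expect this to be the main obstacle, and it is handled cleanly through uniqueness of the positive semi-definite square root rather than by any explicit diagonalisation, although a diagonalisation argument (writing $P$ and $Q$ in their eigenbases, so that $P\otimes Q$ is diagonal with entries the products of the eigenvalues) would work equally well.
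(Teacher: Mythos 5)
Your proof is correct and complete: the first item follows from the uniqueness of the positive semi-definite square root (or equivalently the eigenvalue formula for Hermitian matrices), and the second from the factorisation $\sqrt{P\otimes Q}=\sqrt{P}\otimes\sqrt{Q}$ together with $\Tr(M\otimes N)=\Tr(M)\Tr(N)$, all of which you justify properly. The paper itself states this lemma without proof, treating it as a standard fact and deferring to the cited references only for the subsequent diamond-norm properties, so there is no authorial argument to compare against; your self-contained derivation is exactly the standard one and fills that gap cleanly.
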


\textbf{The induced trace norm.} The \emph{induced trace norm} $\vertiii{\cdot}$ is defined as the usual operator norm with respect to the trace norm $\|\cdot\|_1$. For any linear superoperator $\mathcal{C}$:
\[
 \vertiii{\mathcal{C}}\coloneqq\sup_{\rho\neq 0} \frac{\|\mathcal{C}(\rho)\|_1}{\|\rho\|_1} = \sup _{\|\rho\|_1=1} \|\mathcal{C}(\rho)\|_1.
\]
\noindent Here $\rho$ can be any non-zero matrix, not necssarily a quantum state.

This is not the norm we will use in order to evaluate the closeness of quantum channel. The first reason is that the induced trace norm is not stable under tensor product with the identity. Ref~\cite{10.1145/276698.276708} provides the following counterexample: the transpose map $T:\ket{i}\bra{j}\mapsto\ket{j}\bra{i}$ for $i,j=0,1$, acting on a single qubit system.
Moreover, $\vertiii{\cdot}$ isn't multiplicative with respect to tensor products~\cite{Watrous_2018}. 
Therefore, we will use the diamond norm instead, for which both of these properties hold. The definition of the diamond norm is based on that of $\vertiii{\cdot}$:

\textbf{The diamond norm.} The \emph{diamond norm} (or \emph{completely bounded trace norm}) $\|\cdot\|_{\diamond}$ is defined as follows over the space of superoperators:
\[
 \|\mathcal{C}\|_{\diamond}\coloneqq\vertiii{\mathcal{I}_E \otimes \mathcal{C}}
\]
\noindent If $\mathcal{C}$ acts on system $A$, then $E$ is chosen to be isomorphic to $A$.

These two norms have the following properties. Importantly, $\|\cdot\|_{\diamond}$ is muliplicative with respect to tensor products.

\begin{lemma}[Properties of the induced trace norm and diamond norm]
\label{lem:propdiamondkitaev}
The following properties hold.
\begin{itemize}
 \item $\vertiii{\mathcal{D}\circ \mathcal{C}}\leq \vertiii{\mathcal{D}}\vertiii{\mathcal{C}}$,
 \item For all positive and trace preserving map $\mathcal{C}$, $\vertiii{\mathcal{C}}=1$,
 \item $\|\mathcal{D}\circ \mathcal{C}\|_{\diamond} \leq \|\mathcal{D}\|_{\diamond}\|\mathcal{C}\|_{\diamond}$,
 \item $\|\mathcal{D}\otimes \mathcal{C}\|_{\diamond} = \|\mathcal{D}\|_{\diamond}\|\mathcal{C}\|_{\diamond}$,
 \item For all quantum channel $\mathcal{C}$, $\|\mathcal{C}\|_{\diamond}=1$.
\end{itemize}
\end{lemma}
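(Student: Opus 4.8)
The plan is to treat these as standard facts about the induced trace norm and the diamond norm (see \cite{10.1145/276698.276708,Watrous_2018}), and to organise the five items so that only the first two require a genuinely direct argument: the diamond-norm submultiplicativity and the value $\|\mathcal{C}\|_\diamond=1$ for channels will be \emph{derived} from the corresponding statements for $\vertiii{\cdot}$, together with the stability of the diamond norm under tensoring with the identity (Proposition~\ref{prop:stab_id}) and the multiplicativity of the trace norm under tensor products (Lemma~\ref{lem:proptracenorm}).

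First I would establish the two foundational items. The submultiplicativity $\vertiii{\mathcal{D}\circ\mathcal{C}}\le\vertiii{\mathcal{D}}\,\vertiii{\mathcal{C}}$ is immediate from the operator-norm definition: for every nonzero $X$ one has $\|\mathcal{D}(\mathcal{C}(X))\|_1\le\vertiii{\mathcal{D}}\,\|\mathcal{C}(X)\|_1\le\vertiii{\mathcal{D}}\,\vertiii{\mathcal{C}}\,\|X\|_1$, so dividing by $\|X\|_1$ and taking the supremum gives the claim. For the value $\vertiii{\mathcal{C}}=1$ of a positive trace-preserving map, the lower bound follows by evaluating on any density matrix $\rho$, for which $\|\mathcal{C}(\rho)\|_1=\Tr\mathcal{C}(\rho)=1$ by positivity and trace preservation. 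For the upper bound I would first reduce the supremum, which a priori runs over all matrices, to Hermitian inputs (a positive map is Hermiticity-preserving); then for a Hermitian $X$ with Jordan decomposition $X=X_+-X_-$ into positive parts of orthogonal support, $\|\mathcal{C}(X)\|_1\le\|\mathcal{C}(X_+)\|_1+\|\mathcal{C}(X_-)\|_1=\Tr\mathcal{C}(X_+)+\Tr\mathcal{C}(X_-)=\Tr X_++\Tr X_-=\|X\|_1$.

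The remaining three items then follow formally. For diamond-norm submultiplicativity I would write $\|\mathcal{D}\circ\mathcal{C}\|_\diamond=\vertiii{(\mathcal{I}_E\otimes\mathcal{D})\circ(\mathcal{I}_E\otimes\mathcal{C})}$ and apply the first item, identifying $\vertiii{\mathcal{I}_E\otimes\mathcal{C}}=\|\mathcal{C}\|_\diamond$ and bounding $\vertiii{\mathcal{I}_E\otimes\mathcal{D}}\le\|\mathcal{D}\|_\diamond$ via Proposition~\ref{prop:stab_id}. The value $\|\mathcal{C}\|_\diamond=1$ for a quantum channel is the second item applied to $\mathcal{I}_E\otimes\mathcal{C}$, which is positive (by complete positivity) and trace-preserving. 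Finally, for tensor multiplicativity the upper bound comes from the factorisation $\mathcal{D}\otimes\mathcal{C}=(\mathcal{D}\otimes\mathcal{I})\circ(\mathcal{I}\otimes\mathcal{C})$ combined with diamond submultiplicativity and Proposition~\ref{prop:stab_id}, while the lower bound comes from feeding product inputs $X_1\otimes X_2$ and using $\|Y_1\otimes Y_2\|_1=\|Y_1\|_1\|Y_2\|_1$ from Lemma~\ref{lem:proptracenorm}.

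I expect the only genuine subtlety to lie in the upper bound of the second item, namely the legitimacy of restricting the induced-trace-norm supremum to Hermitian inputs. This is precisely the point where $\vertiii{\cdot}$ is well-behaved but fragile: the transpose map already witnesses that the induced trace norm fails to be stable under tensoring, so the Hermitian reduction cannot be transported to $\mathcal{I}_E\otimes\mathcal{C}$ without changing the norm---which is exactly why passing to the diamond norm is necessary, and why items three through five must route through Proposition~\ref{prop:stab_id} rather than through $\vertiii{\cdot}$ directly.
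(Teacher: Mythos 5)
The paper gives no proof of this lemma at all: it states the five properties and defers entirely to the cited references (Kitaev et al.\ and Watrous), so there is no ``paper route'' to compare against. Your architecture---prove the two $\vertiii{\cdot}$ statements directly, then obtain the three diamond-norm statements formally from them together with Proposition~\ref{prop:stab_id} and the multiplicativity of $\|\cdot\|_1$ under tensor products---is sound and is essentially how these facts are organised in Watrous's book. Items one, three, four and five of your sketch are correct as written (in item three, note that the identity factor $E$ is sized for the input of $\mathcal{C}$ rather than of $\mathcal{D}$, so the bound $\vertiii{\mathcal{I}_E\otimes\mathcal{D}}\leq\|\mathcal{D}\|_{\diamond}$ does need the detour through Proposition~\ref{prop:stab_id} that you indicate).

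The one genuine gap is exactly where you suspected it, but your proposed fix does not close it. The claim that the supremum defining $\vertiii{\mathcal{C}}$ can be restricted to Hermitian inputs ``because a positive map is Hermiticity-preserving'' is not a valid inference: for a Hermiticity-preserving map, decomposing an arbitrary $X$ as $H+iK$ with $H,K$ Hermitian and $\|H\|_1,\|K\|_1\leq\|X\|_1$ only yields $\|\mathcal{C}(X)\|_1\leq 2\sup_{H=H^{\dag}}\|\mathcal{C}(H)\|_1$, i.e.\ a factor-two loss, and Hermiticity preservation alone gives nothing better. The statement $\vertiii{\mathcal{C}}=\max_{\rho\in\st(\rA)}\|\mathcal{C}(\rho)\|_1$ for \emph{positive} maps is a real theorem whose proof must use positivity in an essential way: the standard route is to reduce via the singular value decomposition to rank-one inputs $\ket{u}\bra{v}$ and establish the Cauchy--Schwarz-type bound $\|\mathcal{C}(\ket{u}\bra{v})\|_1\leq\sqrt{\Tr\mathcal{C}(\ket{u}\bra{u})\,\Tr\mathcal{C}(\ket{v}\bra{v})}$ (Watrous, Lemma~3.38 and Theorem~3.39), or dually to invoke the Russo--Dye theorem for the adjoint unital positive map. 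Your Jordan-decomposition step for Hermitian $X$ is fine, but it is the reduction preceding it that carries all the weight, and as justified it would fail. Since everything downstream (in particular $\|\mathcal{C}\|_{\diamond}=1$ for channels, applied to $\mathcal{I}_E\otimes\mathcal{C}$) rests on this item, the gap propagates; either import the Watrous result explicitly, as the paper implicitly does, or supply the rank-one argument.
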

\noindent (See Refs.~\cite{10.1145/276698.276708,Watrous_2018} for proofs of these statements.)

The diamond norm is also stable under tensor with the identity map:
\begin{proposition}
\label{prop:stab_id}
 Let $\mathcal{C}$ be a linear superoperator and $E$ and auxiliary system. We have $\|\mathcal{C}\otimes \mathcal{I}_E\|_{\diamond}=\|\mathcal{C}\|_{\diamond}$.
\end{proposition}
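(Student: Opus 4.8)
The plan is to derive the statement directly from the two tensor-related properties of the diamond norm collected in Lemma~\ref{lem:propdiamondkitaev}. Since $\mathcal{I}_E$ is the identity channel on $E$, it is in particular a quantum channel, so the last bullet of that lemma gives $\|\mathcal{I}_E\|_{\diamond}=1$. The fourth bullet asserts that the diamond norm is multiplicative under tensor products, $\|\mathcal{D}\otimes\mathcal{C}\|_{\diamond}=\|\mathcal{D}\|_{\diamond}\,\|\mathcal{C}\|_{\diamond}$, for arbitrary superoperators. Applying this with the first factor equal to $\mathcal{C}$ and the second equal to $\mathcal{I}_E$ yields $\|\mathcal{C}\otimes\mathcal{I}_E\|_{\diamond}=\|\mathcal{C}\|_{\diamond}\,\|\mathcal{I}_E\|_{\diamond}=\|\mathcal{C}\|_{\diamond}$, which is exactly the claim. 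So at the level of the results already available, the proof is a one-line consequence of multiplicativity.

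If one instead wants an argument straight from the definition $\|\mathcal{C}\|_{\diamond}=\vertiii{\mathcal{I}_R\otimes\mathcal{C}}$ (with $R$ a copy of the input of $\mathcal{C}$), I would split the equality into two inequalities. For $\|\mathcal{C}\otimes\mathcal{I}_E\|_{\diamond}\geq\|\mathcal{C}\|_{\diamond}$, I would take a near-optimal input matrix $Z$ for $\vertiii{\mathcal{I}_R\otimes\mathcal{C}}$, tensor it with a fixed normalized pure matrix supported on the $E$-system together with its own reference copy, and use the multiplicativity of the trace norm on product matrices (Lemma~\ref{lem:proptracenorm}) to transport the value of $Z$ unchanged through $\mathcal{I}\otimes\mathcal{C}\otimes\mathcal{I}_E$. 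For the reverse inequality, I would relabel tensor factors so that $\mathcal{I}_R\otimes\mathcal{C}\otimes\mathcal{I}_E$ becomes $\mathcal{C}$ tensored with the identity on an enlarged reference $R\otimes E$; since permuting tensor factors is implemented by a unitary and leaves the induced trace norm invariant, the value is unchanged.

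The genuinely nontrivial ingredient---and hence the main obstacle in the self-contained route---is the reverse inequality: one must know that enlarging the reference system beyond the dimension of the input of $\mathcal{C}$ does not increase $\vertiii{\mathcal{I}_R\otimes\mathcal{C}}$. This is precisely the stabilization property of the completely bounded (diamond) norm, and it is the content underlying the multiplicativity bullet cited above. Along the first route there is essentially no obstacle; the only point to verify is that the multiplicativity of $\|\cdot\|_{\diamond}$ applies to a general, not necessarily completely positive, superoperator $\mathcal{C}$, which is exactly the form in which the cited references~\cite{10.1145/276698.276708,Watrous_2018} establish it.
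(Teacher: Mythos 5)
Your first route is exactly the paper's (implicit) argument: the proposition is stated without proof immediately after Lemma~\ref{lem:propdiamondkitaev}, as a one-line consequence of the multiplicativity $\|\mathcal{D}\otimes\mathcal{C}\|_{\diamond}=\|\mathcal{D}\|_{\diamond}\|\mathcal{C}\|_{\diamond}$ together with $\|\mathcal{I}_E\|_{\diamond}=1$, and your caveat that multiplicativity must hold for general (not necessarily completely positive) superoperators is correctly resolved by the cited references. The self-contained two-inequality argument you sketch is also sound, and you rightly identify the stabilization of the reference system as the only nontrivial ingredient there.
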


Lastly, it is shown in Ref.~\cite{Watrous_2018} that the diamond norm can be expressed as follows in the case of Hermitian-preserving maps (with our notations):
\begin{lemma}
\label{lem:equalnorms}
Let $\mathcal{C}$ be a Hermitian-preserving linear map acting on system $A$. Then we have:
\[
 \|\mathcal{C}\|_{\diamond} = \sup_{\ket{\psi}\bra{\psi}\in \st(EA)} \|(\mathcal{I}_E\otimes\mathcal{C})(\ket{\psi}\bra{\psi})\|_1,
\]
where $E$ is isomorphic to $A$.
\end{lemma}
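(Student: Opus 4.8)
The plan is to unfold the definition and then peel the supremum down to pure states in two successive reductions. Write $\Phi := \mathcal{I}_E \otimes \mathcal{C}$, which is Hermitian-preserving because $\mathcal{C}$ is. By the definition of the diamond norm (with $E\cong A$ built in),
\[
\|\mathcal{C}\|_{\diamond} = \vertiii{\Phi} = \sup_{\|X\|_1 \le 1} \|\Phi(X)\|_1,
\]
where $X$ ranges over \emph{all} operators on $\mathcal{H}_{EA}$, not merely Hermitian or positive ones. The inequality ``$\ge$'' is immediate: every $\ket{\psi}\bra{\psi}\in\st(EA)$ has $\|\ket{\psi}\bra{\psi}\|_1 = 1$, so it is an admissible input and the claimed supremum over pure states is bounded above by $\vertiii{\Phi}$. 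Everything therefore rests on the reverse inequality.

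For ``$\le$'' I would argue in two steps. First, since $X \mapsto \|\Phi(X)\|_1$ is convex and the trace-norm unit ball is convex and compact (finite dimension), the supremum is attained at an extreme point, i.e. at a rank-one $X = \ket{u}\bra{v}$ with $\|u\| = \|v\| = 1$. The crucial reduction is then to replace such an $X$ by a Hermitian operator of the same trace norm without decreasing $\|\Phi(X)\|_1$; here one uses that $\Phi$ is Hermitian-preserving, so $\Phi(\ket{v}\bra{u}) = \Phi(\ket{u}\bra{v})^{\dag}$, together with the dual characterisation $\|Y\|_1 = \max_{\|W\|_\infty\le1}\mathrm{Re}\,\Tr(W^{\dag}Y)$ of the trace norm, to symmetrise the optimal witness $W$ and the input into a Hermitian pair. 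This produces a Hermitian $H=H^{\dag}$ with $\|H\|_1 \le 1$ and $\|\Phi(H)\|_1 \ge \|\Phi(X)\|_1$.

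Given a Hermitian maximiser $H$, the second step is clean and mirrors the purity argument of Lemma~\ref{lem:infsup}\ref{it:purmax}. Diagonalising $H = \sum_i \lambda_i \ket{e_i}\bra{e_i}$ over an orthonormal eigenbasis, with $\sum_i |\lambda_i| = \|H\|_1 \le 1$, linearity of $\Phi$ and the triangle inequality give
\[
\|\Phi(H)\|_1 = \Bigl\|\sum_i \lambda_i\,\Phi(\ket{e_i}\bra{e_i})\Bigr\|_1 \le \sum_i |\lambda_i|\,\|\Phi(\ket{e_i}\bra{e_i})\|_1 \le \sup_{\ket{\psi}\in\st(EA)} \|\Phi(\ket{\psi}\bra{\psi})\|_1,
\]
since each $\ket{e_i}\bra{e_i}$ is a pure state and $\sum_i|\lambda_i|\le 1$. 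Combined with the extreme-point and Hermitisation steps this closes the chain and yields the stated identity.

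The hard part is the Hermitisation step: a careless splitting of $\ket{u}\bra{v}$ into Hermitian and anti-Hermitian parts bounds $\|\Phi(X)\|_1$ only up to a factor of two, which would be fatal, so the symmetrisation must lose no constant. This is exactly the content of the standard result on the completely bounded trace norm of Hermitian-preserving maps~\cite{Watrous_2018}; I would either invoke it directly or reproduce its proof, which carries out the Hermitisation through the operator-Schmidt (Choi) representation of $\Phi$ and exploits that the ancilla $E$ is isomorphic to $A$.
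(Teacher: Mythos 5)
The paper offers no proof of this lemma at all: it is quoted directly from Ref.~\cite{Watrous_2018} (``it is shown in Ref.~[...] that the diamond norm can be expressed as follows''), so there is no in-paper argument to compare yours against, and by attempting a proof you are doing strictly more than the authors do. Your outer scaffolding is sound: the ``$\geq$'' direction, the reduction of $\sup_{\|X\|_1\leq 1}\|\Phi(X)\|_1$ to the extreme points $\ket{u}\bra{v}$ of the trace-norm unit ball, and the eigendecomposition argument taking a Hermitian maximiser down to pure states (the analogue of Lemma~\ref{lem:infsup}, item~\ref{it:purmax}) are all correct.

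The gap is precisely in the step you flag as the hard one, and the mechanism you sketch for it would not close it. Symmetrising the witness and the input into a Hermitian pair \emph{on the same space} $\mathcal{H}_{EA}$ amounts to replacing $\ket{u}\bra{v}$ by $H_\theta=\tfrac12(e^{i\theta}\ket{u}\bra{v}+e^{-i\theta}\ket{v}\bra{u})$, whence $\Phi(H_\theta)=\tfrac12(e^{i\theta}Y+e^{-i\theta}Y^{\dag})$ with $Y=\Phi(\ket{u}\bra{v})$; but $\max_\theta\|\tfrac12(e^{i\theta}Y+e^{-i\theta}Y^{\dag})\|_1$ can fall short of $\|Y\|_1$ by a factor of $\sqrt2$ (take $Y=\mathrm{diag}(1,i)$, where $\|Y\|_1=2$ while the symmetrised norm is $|\cos\theta|+|\sin\theta|\leq\sqrt2$), so no choice of phase rescues a lossless same-space Hermitisation. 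The actual proof in \cite{Watrous_2018} works differently: it \emph{doubles the ancilla}, forming the pure state $w=(u\oplus v)/\sqrt2$ on $\mathcal{H}_{EA}\otimes\mathbb{C}^2$, observes that $(\mathcal{I}\otimes\mathcal{C}\otimes\mathcal{I})(ww^{*})$ is Hermitian with off-diagonal block $\tfrac12(\mathcal{I}_E\otimes\mathcal{C})(\ket{u}\bra{v})$, uses that a Hermitian block matrix has trace norm at least twice that of its off-diagonal block (exhibit a block-antidiagonal unitary witness built from the polar decomposition of that block), and then contracts the enlarged ancilla back to $\dim\mathcal{H}_A$ via the Schmidt-rank bound. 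Note that this route passes directly from rank-one inputs to pure states, so your intermediate Hermitian $H$ and the subsequent eigendecomposition become unnecessary. Since you ultimately defer to the reference---exactly as the paper does---the proposal is acceptable as a citation; but the proof you offer to ``reproduce'' is not the one you sketched.
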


This matches the definition we gave in Section~\ref{section:definitions}, where we only considered the norm of Hermitian-preserving maps. (Indeed, we show in Lemma~\ref{lem:infsup} that the supremum can equivalently be taken over pure states.) In particular, the difference of two quantum channels is Hermitian preserving, so we use this expression, which is easier to calculate.

A useful result on the distance between two quantum channels given in
terms of the diamond norm has been proved in
Ref.~\cite{PhysRevA.71.062340}:

\begin{lemma}\label{lem:distance-random-unitaries} Consider two channels
  $\mathcal{E}_1, \mathcal{E}_1\in\mathfrak{C}(\rA)$ that can be
  realized as random unitary transformations from the same set of
  orthogonal unitaries, namely
  $\mathcal{E}_n(\cdot)=\sum_ip^{(n)}_i\mathcal{U}_i(\cdot)$, with
  $\sum_i p_i^{(n)}=1$, for $ n=1,2$. Then one has
\begin{align*}
&  \| \mathcal{E}_1-\mathcal{E}_2 \|_\diamond=\\
&  \max_{\rho\in\st(\rE\rA)}\|
  (\mathcal{I}_\rE\otimes\mathcal{E}_1)\rho-(\mathcal{I}_\rE\otimes\mathcal{E}_2)
                                                              \rho \|_1=2 \sum_i |c_i|,\\
  &c_i=p_i^{(1)}-p_i^{(2)},
\end{align*}
where the maximum is achieved by using an arbitrary maximally entangled
state $\rho$ at the input. 
\end{lemma}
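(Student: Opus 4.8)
The plan is to bracket $\|\mathcal{E}_1-\mathcal{E}_2\|_\diamond$ between a trivial upper bound and a matching lower bound. Write $\Phi\coloneqq\mathcal{E}_1-\mathcal{E}_2=\sum_i c_i\,\mathcal{U}_i$ with $c_i=p_i^{(1)}-p_i^{(2)}$; being a difference of channels, $\Phi$ is Hermitian-preserving, so by Lemma~\ref{lem:equalnorms} its diamond norm equals the supremum of $\|(\mathcal{I}_\rE\otimes\Phi)(\ket{\psi}\bra{\psi})\|_1$ over pure states on $\rE\rA$ with $\rE\cong\rA$. The upper bound is then immediate: by the triangle inequality and absolute homogeneity of $\|\cdot\|_\diamond$, together with the fact that every unitary channel has unit diamond norm (Lemma~\ref{lem:propdiamondkitaev}), one gets $\|\Phi\|_\diamond\le\sum_i|c_i|\,\|\mathcal{U}_i\|_\diamond=\sum_i|c_i|$.

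For the lower bound I would evaluate $\Phi$ on a maximally entangled state $\ket{\Phi^+}\coloneqq d_\rA^{-1/2}\sum_k\ket{k}\ket{k}$ and set $\ket{\psi_i}\coloneqq(\mathcal{I}_\rE\otimes U_i)\ket{\Phi^+}$. The single decisive step --- and the only place the orthogonality hypothesis on the $U_i$ is used --- is the computation
\[
\langle\psi_i|\psi_j\rangle=\frac{1}{d_\rA}\Tr[U_i^\dag U_j]=\delta_{ij},
\]
which shows that the $\ket{\psi_i}$ form an orthonormal family. Hence $(\mathcal{I}_\rE\otimes\Phi)(\ket{\Phi^+}\bra{\Phi^+})=\sum_i c_i\,\ket{\psi_i}\bra{\psi_i}$ is already written in its spectral decomposition, with eigenvalues $c_i$, so its trace norm is exactly $\sum_i|c_i|$ (equivalently $2\sum_{i:c_i>0}c_i$, since $\sum_i c_i=0$). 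This matches the upper bound and forces the optimum to be realised at $\ket{\Phi^+}$, yielding the $\ell^1$-distance between the two probability vectors.

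To obtain the final clause about arbitrary maximally entangled states, I would repeat the computation with a general maximally entangled input $(\mathcal{I}_\rE\otimes V)\ket{\Phi^+}$ (equivalently $(W\otimes\mathcal{I}_\rA)\ket{\Phi^+}$) for unitaries $V,W$: the extra factors cancel inside $\Tr[(U_iV)^\dag U_jV]=\Tr[U_i^\dag U_j]$, so the images stay orthonormal and the same value is attained. Thus every maximally entangled state is optimal.

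The main obstacle is not the upper bound, which is a one-line consequence of the triangle inequality, but the verification that it is tight. Tightness rests entirely on turning $\sum_i c_i\mathcal{U}_i$ into a \emph{diagonal} operator when probed by a maximally entangled input, which in turn requires the Hilbert--Schmidt orthogonality $\Tr[U_i^\dag U_j]=d_\rA\,\delta_{ij}$ of the unitaries; isolating this orthonormality of the $\ket{\psi_i}$ and reading off the resulting trace norm as the $\ell^1$-norm of the coefficient vector $(c_i)$ is the crux of the argument.
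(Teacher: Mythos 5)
Your argument is correct and complete, and it is worth noting that the paper itself gives no proof of this lemma at all---it simply defers to Ref.~\cite{PhysRevA.71.062340}---so your derivation is a genuine addition: the triangle-inequality upper bound $\|\sum_i c_i\,\mathcal{U}_i\|_\diamond\le\sum_i|c_i|$ via Lemma~\ref{lem:propdiamondkitaev}, matched by the lower bound obtained from the maximally entangled input, where the Hilbert--Schmidt orthogonality $\Tr[U_i^\dag U_j]=d_\rA\delta_{ij}$ makes the vectors $\ket{\psi_i}=(\mathcal{I}\otimes U_i)\ket{\Phi^+}$ orthonormal and hence turns $\sum_i c_i\ket{\psi_i}\bra{\psi_i}$ into a spectral decomposition with trace norm $\sum_i|c_i|$. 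You have correctly isolated the one step where orthogonality enters, and your treatment of arbitrary maximally entangled states (conjugating by $V$ inside the trace, and discarding the ancilla unitary by invariance of the trace norm) is sound.

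One point deserves explicit mention. Your proof yields the value $\sum_i|c_i|$, whereas the lemma as printed asserts $2\sum_i|c_i|$. The printed value cannot be right as literally written: it can exceed $2$, which is impossible for the diamond distance of two channels, and it is inconsistent with the paper's own applications (Theorem~\ref{thm:signalling-swap} takes $p^{(1)}=(1,0,\dots,0)$ and $p^{(2)}=(1/d_\rA^2,\dots,1/d_\rA^2)$ and reports $\Sigma(\mathcal{S})=2(d_\rA^2-1)/d_\rA^2$, which is exactly $\sum_i|c_i|$, not twice that). The intended reading is evidently $2\sum_{i:c_i>0}c_i$ (twice the total variation distance between the probability vectors), which, as you observe, coincides with $\sum_i|c_i|$ because $\sum_i c_i=0$. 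So your result is the correct one and the one actually used downstream; you should simply state this reconciliation explicitly rather than leaving it as a parenthetical.
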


\section{Proofs}

\begin{lemma}
 \label{lem:optimalmatrix}
 Let $\bar{B}$ and $E$ be isomorphic systems each corresponding to $n$ qubits. The quantum channel $\mathcal{M}^{\otimes n} \otimes \mathcal{I}_E$ acts as a projector on the space of Hermitian matrices over $\mathcal{H}_{\bar{B}E}$. Let $E_0$ and $E_1$ denote its image and kernel, respectively. Every state $\rho\in \st(\bar{B}E)$ can thus be decomposed as $\rho=\rho_0+\rho_1$ with $\rho_0\in E_0, \rho_1\in E_1$. Then:
 \begin{align*}
&  \inf_{\substack{N \in E_0: \\ \Tr(N)=0}} \ \max_{\rho \in
    \st(\bar{B}E)} \left\| \rho_1 + N \right\|_1 =\\
&  \min_{\substack{N \in E_0: \\ \Tr(N)=0}} \ \max_{\rho \in \st(\bar{B}E)} \left\| \rho_1 + N \right\|_1 = \max_{\rho \in \st(\bar{B}E)} \left\| \rho_1 \right\|_1.
 \end{align*}

\end{lemma}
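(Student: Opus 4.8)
The plan is to reduce the two nontrivial equalities to a single inequality: for every $N\in E_0$ with $\Tr N=0$,
\[
\max_{\rho\in\st(\bar BE)}\|\rho_1+N\|_1\ \ge\ m,\qquad m:=\max_{\rho\in\st(\bar BE)}\|\rho_1\|_1 .
\]
Since $N=0$ already attains the value $m$ (as $\rho_1+0=\rho_1$), this inequality shows simultaneously that the infimum is a minimum, that it is attained at $N=0$, and that its value is $m$. Throughout I write $\cP:=\mathcal M^{\otimes n}\otimes\mathcal I_E$ for the dephasing projector, so $E_0=\mathrm{im}\,\cP$, $E_1=\ker\cP$, $\rho_1=(\mathcal I-\cP)(\rho)$, and I set $d:=2^n$. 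Compactness of $\st(\bar BE)$ and continuity of $\rho\mapsto\|(\mathcal I-\cP)(\rho)\|_1$ guarantee that $m$ is attained.

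First I would dualize the trace norm. Using $\|Y\|_1\ge\Tr(HY)$ for any Hermitian $H$ with $\|H\|_\infty\le1$, the self-adjointness of $\cP$ (so $\Tr(H\rho_1)=\Tr((\mathcal I-\cP)(H)\,\rho)$), and $\max_\rho\Tr(\tilde H\rho)=\lambda_{\max}(\tilde H)$ with $\tilde H:=(\mathcal I-\cP)(H)$, one obtains for every such $H$
\[
\max_{\rho}\|\rho_1+N\|_1\ \ge\ \lambda_{\max}(\tilde H)+\Tr(HN).
\]
Hence it suffices to produce a single Hermitian $H$, $\|H\|_\infty\le1$, with (i) $\Tr(HN)=0$ for all $N\in E_0$ with $\Tr N=0$, and (ii) $\lambda_{\max}(\tilde H)\ge m$. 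Condition (i) holds precisely when the block-diagonal part $\cP(H)$ is a multiple of the identity: $E_0$ splits orthogonally as $\{N\in E_0:\Tr N=0\}\oplus\mathbb R\,I$, and $\tilde H\in E_1\perp E_0$, so $\Tr(HN)=\Tr(\tilde HN)+\Tr(\cP(H)N)=0$ once $\cP(H)\propto I$ and $\Tr N=0$.

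The certificate would be built from the maximally entangled state $\ket\Phi=\tfrac1{\sqrt d}\sum_i\ket{i}_{\bar B}\ket{i}_E$. Set $\tilde H:=2(\mathcal I-\cP)(\proj\Phi)=2\rho_1^{\Phi}$. On the subspace $V=\Span\{\ket{ii}\}$ one has $\rho_1^{\Phi}=\tfrac1d(J-I)$ with $J$ the all-ones matrix, so $\tilde H$ has eigenvalues $\tfrac{2(d-1)}d$, $-\tfrac2d$ and $0$; in particular $\lambda_{\max}(\tilde H)=\tfrac{2(d-1)}d$ and its spectral spread is exactly $2$. Taking $H:=\tilde H+cI$ with $c=\tfrac2d-1$ gives $\|H\|_\infty=1$, and since $I$ is $\cP$-invariant ($\mathcal M$ is unital) we get $\cP(H)=cI\propto I$, establishing (i), and $(\mathcal I-\cP)(H)=\tilde H$. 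It then remains to verify (ii), i.e. $\tfrac{2(d-1)}d\ge m$. This is the upper bound $m\le\tfrac{2(d-1)}d$, which I would derive from $\mathcal I-\cP=\tfrac1d\sum_{g\ne e}(\mathcal I-\mathrm{Ad}_g)$, where $g$ runs over the $d$ diagonal sign operators $\bigotimes_k V_k\otimes I_E$ with $V_k\in\{I,Z\}$ realizing $\cP=\tfrac1d\sum_g\mathrm{Ad}_g$; the triangle inequality with $\|\rho-g\rho g\|_1\le2$ yields $\|\rho_1\|_1\le\tfrac1d(d-1)\cdot2=\tfrac{2(d-1)}d$ for every state.

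The main obstacle is that the naive arguments both fail once $n\ge2$. The ``off-diagonal part'' map $\mathcal I-\cP$ is \emph{not} trace-norm contractive (its induced trace norm already exceeds $1$ on $\mathbb C^{4}$, e.g. on the all-ones matrix), so one cannot write $\|\rho_1\|_1=\|(\mathcal I-\cP)(\rho_1+N)\|_1\le\|\rho_1+N\|_1$ directly. Similarly, the symmetry $f(N)=f(-N)$ that would let one average $f(N)+f(-N)\ge2f(0)$ down to the claim is false, since $\rho\mapsto\rho_0-\rho_1$ fails to preserve positivity for $n\ge2$. The duality reformulation sidesteps both: the delicate point becomes the existence of an optimal dual operator whose block-diagonal part is proportional to the identity, which is exactly what the spectral spread of $\tilde H$ being $2$ supplies. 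I would then close the argument by combining (i) and (ii): for every admissible $N$, $\max_\rho\|\rho_1+N\|_1\ge\lambda_{\max}(\tilde H)+\Tr(HN)=\tfrac{2(d-1)}d\ge m$, which together with the value $m$ at $N=0$ delivers all three equalities.
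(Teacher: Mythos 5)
Your proof is correct, and it takes a genuinely different route from the paper's. The paper first shows the infimum is attained (restricting to $\|N\|_1\le 4$ and invoking compactness), observes that the set of optimal $N$ is convex, and then symmetrizes: an optimal block-diagonal $N$ is averaged over permutation unitaries that preserve $E_0$, $E_1$ and the trace norm --- first within each block, then across blocks --- until the trace-zero constraint forces the averaged optimizer to be $0$. That argument is purely structural and never needs the value of the optimum. You instead exhibit a single dual certificate $H=2(\mathcal{I}-\mathcal{P})(\ket{\Phi}\bra{\Phi})+(\tfrac{2}{d}-1)I$ whose pinched part is proportional to the identity (so $\Tr(HN)=0$ for every traceless $N\in E_0$, by orthogonality of $E_0$ and $E_1$) and whose off-diagonal part has top eigenvalue $\tfrac{2(d-1)}{d}$; together with the upper bound $\|\rho_1\|_1\le\tfrac{2(d-1)}{d}$ from the random-unitary decomposition of the pinching, this lower-bounds the objective by $m$ uniformly in $N$ and settles all three equalities at once. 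The computations check out: the spectrum of $\tfrac{2}{d}(J-I)$ on $\mathrm{Span}\{\ket{ii}\}$ (with $J$ the all-ones matrix) is $\{\tfrac{2(d-1)}{d},-\tfrac{2}{d}\}$, the shift $c=\tfrac{2}{d}-1$ yields $\|H\|_\infty=1$, and $\mathcal{P}$ is self-adjoint so the pairing passes to $\tilde H$. What your route buys is the explicit value $\max_\rho\|\rho_1\|_1=\tfrac{2(d-1)}{d}$ and the optimality of the maximally entangled state as by-products --- facts the paper only extracts afterwards, in the signalling theorem for $\mathcal{C}_X^{\otimes n}$, via Lemma~\ref{lem:distance-random-unitaries}; what the paper's route buys is independence from any explicit evaluation, so it would survive replacing $\mathcal{M}^{\otimes n}$ by a pinching whose optimum is harder to compute. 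Your side remark on why the naive contraction argument fails for $n\ge 2$ is also accurate.
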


\begin{proof}
We have
\begin{align*}
 \mathcal{M}^{\otimes n}(X) =\sum_{b\in\{0,1\}^n} P_b^{(n)}XP_b^{(n)},
\end{align*}
where $P_b^{(n)}\coloneqq \bigotimes_{i=1}^n {P_{b_i}}$. Therefore, $\mathcal{M}^{\otimes n}\otimes \mathcal{I}_E$ acts on Hermitian matrices over $\mathcal{H}_{\bar{B}E}$ by selecting diagonal blocks. $E_0$ is therefore to the space of block-diagonal matrices with respect to the canonical basis $b\in\{0,1\}^n$, and conversely $E_1$ is the space of matrices whose non-zero blocks are off diagonal. The goal is now to prove that the infimum is attained for $N=0$. We must first ensure that the infimum is a minimum. We have $\left\| \rho_1 \right\|_1 \leq \left\| \rho \right\|_1+\left\| \rho_0 \right\|_1=2$, so when $N=0$ we have  $\left\| \rho_1 + N \right\|_1\leq 2$ whereas when $\|N\|_1>4$ we have $\left\| \rho_1 + N \right\|_1\geq |\|\rho_1\|_1 - \|N\|_1|>2$. Therefore it suffices to consider $\|N\|_1\leq 4$. Since the subset of elements of $E_0$ of trace 0 and norm no greater than 4 is a compact space, and $N\mapsto \max_{\rho \in \st(\bar{B}E)} \left\| \rho_1 + N \right\|_1$ is a continuous function, the infimum is a minimum. Moreover, it can be shown that the set of all $N$ for which the minimum is attained is a convex set.
Choose $N\in E_0$ that reaches the infimum. By definition of $E_0$, $N$ has the following block diagonal form:
\[
 N=\begin{bmatrix}
    N_1 &  \cdots & 0 \\
    \vdots & \ddots & \vdots \\
    0 & \cdots & N_{2^n}
   \end{bmatrix}.
\]
The trace norm is stable under application of unitary channels, and $E_0$ and $E_1$ are stable under application of block diagonal unitary channels. Therefore, we may assume without loss of generality that every $N_i$ is a diagonal matrix: 
\[
 N=\begin{bmatrix}
    D_1 &  \cdots & 0 \\
    \vdots & \ddots & \vdots \\
    0 & \cdots & D_{2^n}
   \end{bmatrix}.
\]
Let
\[
 D_1=\begin{bmatrix}
    \lambda_1 &  \cdots & 0 \\
    \vdots & \ddots & \vdots \\
    0 & \cdots & \lambda_{2^n}
   \end{bmatrix},
\]
be the first diagonal block. By applying the unitary channel that permutes the first block according to $\pi\in S(2^n)$, the following matrix is also optimal:
\begin{align*}
& \begin{bmatrix}
    D_1^{\pi} &  \cdots & 0 \\
    \vdots & \ddots & \vdots \\
    0 & \cdots & D_{2^n}
   \end{bmatrix},\\
&\pi\in S(2^n),\qquad
 D_1^{\pi}=\begin{bmatrix}
    \lambda_{\pi(1)} &  \cdots & 0 \\
    \vdots & \ddots & \vdots \\
    0 & \cdots & \lambda_{\pi(2^n)}
   \end{bmatrix}.
\end{align*}
Therefore, using the fact that the set of optimal channels is convex, the following convex combination is also an optimal matrix:
\begin{align*}
 \frac{1}{(2^n)!}\sum_{\pi\in S(2^n)} \begin{bmatrix}
    D_1^{\pi} &  \cdots & 0 \\
    \vdots & \ddots & \vdots \\
    0 & \cdots & D_{2^n}
   \end{bmatrix} &= \begin{bmatrix}
    \beta_1 I &  \cdots & 0 \\
    \vdots & \ddots & \vdots \\
    0 & \cdots & D_{2^n}
   \end{bmatrix},\\
   \beta_1&=\frac{1}{2^n}\sum_{i=1}^{2^n}\lambda_i.
\end{align*}
By repeating the same process for each block,
\begin{align*}
\begin{bmatrix}
    \beta_1 I &  \cdots & 0 \\
    \vdots & \ddots & \vdots \\
    0 & \cdots & \beta_{2^n} I
   \end{bmatrix} = \begin{bmatrix}
    \beta_1&  \cdots & 0 \\
    \vdots & \ddots & \vdots \\
    0 & \cdots & \beta_{2^n}
   \end{bmatrix} \otimes I
\end{align*}
is also optimal. Moreover, since the infimum is taken over matrices of trace 0, we have $\sum_{i=1}^{2^n} \beta_i=0$. We now proceed similarly, but permute the blocks themselves rather than the terms within one block. For $\pi\in S(2^n)$, let $U_{\pi}$ be the corresponding permutation matrix. We consider the quantum channel $\mathcal{C}_{\pi}\coloneqq U_{\pi}(\cdot)U_{\pi}^{\dag} \otimes \mathcal{I}_E$, which preserves the spaces $E_0$ and $E_1$. Since the trace norm is stable under application of unitary channels,
\[
\mathcal{C}_{\pi}\left(\begin{bmatrix}
    \beta_1 &  \cdots & 0 \\
    \vdots & \ddots & \vdots \\
    0 & \cdots & \beta_{2^n}
   \end{bmatrix} \otimes I\right) =
   \begin{bmatrix}
    \beta_{\pi(1)} &  \cdots & 0 \\
    \vdots & \ddots & \vdots \\
    0 & \cdots & \beta_{\pi(2^n)}
   \end{bmatrix} \otimes I
\]
is also optimal. By convexity of the set of optimal channels,
\[
 \frac{1}{(2^n)!}\sum_{\pi\in S(2^n)} \begin{bmatrix}
    \beta_{\pi(1)} &  \cdots & 0 \\
    \vdots & \ddots & \vdots \\
    0 & \cdots & \beta_{\pi(2^n)}
   \end{bmatrix} \otimes I = 0,
\]
is also optimal, which concludes the proof.
\end{proof}
\end{appendices}
\end{document}